\newtheorem{theorem}{Theorem}[section]
\newtheorem{lemma}[theorem]{Lemma}
\newtheorem{corollary}[theorem]{Corollary}
\newtheorem{proposition}[theorem]{Proposition}
\newtheorem{remark}[theorem]{Remark}
\theoremstyle{definition}
\def\Spdr{\mathrm{Sp}(2d,\R)}
\def\la{\lambda}
\def\cF{\mathcal{F}}
\def\cS{\mathcal{S}}
\def\cM{\mathcal{M}}
\def\cU{\mathcal{U}}
\def\cA{\mathcal{A}}
\def\cL{\mathcal{L}}
\def\cA{\mathcal{A}}
\def\bR{\mathbb{R}}
\def\bC{\mathbb{C}}
\def\rd{\bR^d}
\def\rdd{{\bR^{2d}}}
\def\R{\right)}
\def\<{\left<}
\def\>{\right>}
\def\mv1{M_v^1}
\def\R{\mathbb{R}}
\def\sch{\mathcal{S}}
\def\Fur{\mathcal{F}}
\def\Sn2{S_{2}(L^{2}(\Ren))}
\def\S1{S_{1}(L^{2}(\Ren))}
\def\sig00{\sigma_{0,0}}
\def\la{\langle}
\def\ra{\rangle}
\def\Spdr{\mathrm{Sp}(d,\bR)}
\def\Mpdr{\mathrm{Mp}(d,\bR)}
\begin{document}
	
\title[On the pointwise convergence in the Feynman-Trotter formula] {On the pointwise convergence of the integral kernels in the Feynman-Trotter formula}
\author{Fabio Nicola and S. Ivan Trapasso}
\address{Dipartimento di Scienze Matematiche,
	Politecnico di Torino, corso Duca degli Abruzzi 24, 10129 Torino,
	Italy}
\address{Dipartimento di Scienze Matematiche,
	Politecnico di Torino, corso Duca degli Abruzzi 24, 10129 Torino,
	Italy}
\email{fabio.nicola@polito.it}
\email{salvatore.trapasso@polito.it}
\subjclass[2010]{81S40, 81S30, 35S05, 42B35, 47L10.}
\keywords{Feynman path integrals, time slicing approximation, modulation spaces, Trotter product formula, metaplectic operators, Schr\"odinger equation.}

\begin{abstract}We study path integrals in the Trotter-type form for the  Schr\"odinger equation, where the Hamiltonian is the Weyl quantization of a real-valued quadratic form perturbed by
 a potential $V$ in a class encompassing that - considered by Albeverio and It\^o in celebrated papers - of Fourier transforms of complex measures. Essentially, $V$ is bounded and has the regularity of a function whose Fourier transform is in $L^1$. Whereas the strong convergence in $L^2$ in the Trotter formula, as well as several related issues at the operator norm level are well understood, the original Feynman's idea concerned the subtler and widely open problem of the pointwise convergence of the corresponding probability amplitudes, that are the integral kernels of the approximation operators. We prove that, for the above class of potentials, such a convergence at the level of the integral kernels in fact occurs, uniformly on compact subsets and for every fixed time, except for certain exceptional time values for which the kernels are in general just distributions. 
 Actually, theorems are stated for potentials in several function spaces arising in Harmonic Analysis, with corresponding convergence results. Proofs rely on Banach algebras techniques for pseudo-differential operators acting on such function spaces. 
\end{abstract}
\maketitle

\section{Introduction and main results}

The path integral formulation of Quantum Mechanics is by far one of
the major achievements in modern theoretical physics. The first intuition
on the issue is attributed to Dirac: in his celebrated 1933 paper \cite{dirac 33} he provided several clues indicating that the Lagrangian formulation of classical mechanics should have a quantum counterpart. While it is debatable whether the entire story was already known to him at the time of writing, his program has been finalised by Feynman \cite{feyn1 48}, who explicitly recognized Dirac's remarks as the main source of inspiration for his landmark contribution to a new formulation of non-relativistic quantum mechanics beyond the Schr\"odinger and Heisenberg pictures. 

\subsection{The sequential approach to path integrals}
We could argue that the path integral formulation comes from a profound understanding of the double-slit experiment - in fact, this is precisely
the way Feynman introduces the problem in the book \cite{feyn2 hibbs}.
While this is an intriguing perspective, we briefly outline this approach
from a different starting point. Recall that the state of a non-relativistic
particle in the Euclidean space $\mathbb{R}^{d}$ at time $t$ is
represented by the wave function $\psi\left(t,x\right)$, $\left(t,x\right)\in\mathbb{R}\times\mathbb{R}^{d}$, 
such that $\psi\left(t,\cdot\right)\in L^{2}(\mathbb{R}^{d})$. The dynamics under the real-valued potential
$V$ is regulated by the Cauchy problem for the Schr\"odinger equation\footnote{We set $m=1$ for the mass of the particle and $\hbar=1$
	for the Planck constant.}:
\begin{equation}
\begin{cases}
i\partial_{t}\psi=(H_{0}+V(x))\psi\\
\psi(0,x)=\varphi(x),
\end{cases}\label{cauchy schr}
\end{equation}
where $H_{0}=-\triangle/2$ is the free Hamiltonian of non-relativistic
quantum mechanics. Provided that suitable conditions on the potential
are satisfied, the dynamics in \eqref{cauchy schr} can be equivalently
recast by means of the unitary \emph{propagator} $U\left(t\right)=e^{-itH}$,
$t\in\mathbb{R}$: 
\[
\psi\left(t,x\right)=U\left(t\right)\varphi(x).
\]
At least on a formal level, one can thus represent $U\left(t\right)$
as an integral operator:
\[
\psi\left(t,x\right)=\int_{\mathbb{R}^{d}}u_t(x,x_0)\varphi(x_0)dx_0,
\]
where the kernel $u_t(x,x_0)$ intuitively yields the transition
amplitude from the position $x_0$ at time $0$ to the position $x$ at
time $t$. The path integral formulation exactly concerns the determination
of this kernel: according to Feynman's prescription, one should take
into account the many possible \emph{interfering alternative paths
}from $x_0$ to $x$ that the particle could follow. Each path would contribute to the total probability amplitude with
a phase factor proportional to the \emph{action functional} corresponding
to the path:
\[
S\left[\gamma\right]=\int_{s}^{t}L\left(\gamma\left(\tau\right),\dot{\gamma}\left(\tau\right)\right)d\tau,
\]
where $L$ is the Lagrangian of the corresponding classical system.
In short, a merely formal representation of the kernel is
\begin{equation}
u_t(x,x_0)=\int e^{iS\left[\gamma\right]}\mathcal{D}\gamma,\label{prop path int}
\end{equation}
namely a sort of integral over the infinite-dimensional space of paths satisfying the aforementioned boundary conditions. In
order to shed some light on the heuristics underpinning this formula,
let us briefly outline the so-called \emph{sequential approach} to
path integrals introduced by Nelson \cite{nelson}, which seems the
closest to Feynman's original formulation. First, recall that the
free propagator $e^{-itH_{0}}$ can be properly identified with a
Fourier multiplier and the following integral expression holds: 
\[
e^{-itH_{0}}\varphi\left(x\right)=\frac{1}{\left(2\pi it\right)^{d/2}}\int_{\mathbb{R}^{d}}\exp\left(i\frac{\left|x-x_0\right|^{2}}{2t}\right)\varphi(x_0)dx_0,\qquad\varphi\in\mathcal{S}(\mathbb{R}^{d}).
\]
Next, under suitable assumptions for the potential $V$, the Trotter product formula holds for the propagator generated by $H=H_{0}+V$:
\[
e^{-it\left(H_{0}+V\right)}=\lim_{n\rightarrow\infty}\left(e^{-i\frac{t}{n}H_{0}}e^{-i\frac{t}{n}V}\right)^{n},
\]
where the limit is intended in the strong topology of operators in
$L^{2}(\mathbb{R}^{d})$. Combining these two results gives
the following representation of the complete propagator $e^{-itH}$
as limit of integral operators (cf.\ \cite[Thm.\ X.66]{reed simon 2}):
\begin{equation}
e^{-it\left(H_{0}+V\right)}\phi(x)=\lim_{n\rightarrow\infty}\left(2\pi i\frac{t}{n}\right)^{-\frac{nd}{2}}\int_{\mathbb{R}^{nd}}e^{iS_{n}\left(t;x_{0},\ldots,x_{n-1},x\right)}\varphi\left(x_{0}\right)dx_{0}\ldots dx_{n-1},\label{prop ts limit}
\end{equation}
where
\[
S_{n}\left(t;x_{0},\ldots,x_{n-1},x\right)=\sum_{k=1}^{n}\frac{t}{n}\left[\frac{1}{2}\left(\frac{\left|x_{k}-x_{k-1}\right|}{t/n}\right)^{2}-V\left(x_{k}\right)\right], \quad x_{n}=x.
\]
In order to grasp the meaning of the phase $S_{n}\left(t;x_{0},\ldots,x_{n}\right)$,
consider the following argument: given the points $x_{0},\ldots,x_{n-1},x\in\mathbb{R}^{d}$,
let $\overline{\gamma}$ be
the polygonal path through the vertices $x_{k}=\overline{\gamma}\left(kt/n\right)$,
$k=0,\ldots,n$, $x_n=x$, parametrized as 
\[
\overline{\gamma}\left(\tau\right)=x_{k}+\frac{x_{k+1}-x_{k}}{t/n}\left(\tau-k\frac{t}{n}\right),\qquad\tau\in\left[k\frac{t}{n},\left(k+1\right)\frac{t}{n}\right],\qquad k=0,\ldots,n-1.
\]
Hence $\overline{\gamma}$ prescribes a classical motion with constant
velocity along each segment. The action for this path is thus given
by 
\[
S\left[\overline{\gamma}\right]=\sum_{k=1}^{n}\frac{1}{2}\frac{t}{n}\left(\frac{\left|x_{k}-x_{k-1}\right|}{t/n}\right)^{2}-\int_{0}^{t} V(\overline{\gamma}(\tau))d\tau.
\]
According to Feynman's interpretation, \eqref{prop ts limit} can
be thought as an integral over all polygonal paths and $S_{n}\left(x_{0},\ldots,x_{n},t\right)$
is a Riemann-like approximation of the action functional evaluated
on them. The limit $n\rightarrow\infty$ is now intuitively clear:
the set of polygonal paths becomes the set of all paths and we recover \eqref{prop path int}. In fact, it should be noted
that the custom in Physics community is to employ the suggestive formula
\eqref{prop path int} as a placeholder for \eqref{prop ts limit}
and the related arguments - see for instance \cite{gs,kleinert}.

We could not hope to frame the vast literature concerning the problem
of putting the formula \eqref{prop path int} on firm mathematical
ground; the interested reader could benefit from the monographs \cite{albeverio book,fujiwara5 book, mazzucchi} as points of departure. We only remark that the there is in general some relationship between the regularity assumptions on the potential and the strength
of the convergence of the time-slicing approximation. While the operator
theoretic strategy outlined above also allows to treat wild potentials,
the convergence in finer operator topologies (for instance, at the
level of integral kernels as in Feynman's original formulation) have been an open
problem for a long time. Nevertheless, there is a variety of schemes to deal
with path integrals and pointwise convergence of integral kernels
can be achieved by means of other sophisticated techniques, at least for smooth potentials - see the works of Fujiwara, Ichinose, Kumano-go and coauthors  
\cite{fujiwara1 fund sol,fujiwara2 duke,ichinose1,ichinose2,ichinose3,ichinose4,kumanogo0,kumanogo1,kumanogo2,kumanogo3,kumanogo4,kumanogo5,kumanogo6}. 

\subsection{Main results}\label{sec maint}
The present contribution aims at investigating the convergence at
the level of integral kernels for the time-slicing approximation of
path integrals under low regularity assumptions for the involved potentials.
We consider the Schr\"odinger
equation 
\begin{equation}\label{schr maint}
i\partial_{t}\psi=\left(H_{0}+V(x)\right)\psi,
\end{equation}
where now $H_{0}=a^{\text{w}}$ is the Weyl quantization of a real
quadratic form $a\left(x,\xi\right)$ on $\mathbb{R}^{2d}$ and
$V\in L^{\infty}(\mathbb{R}^{d})$ is complex-valued (so that a linear magnetic potential or a quadratic electric potential are allowed and included in $H_0$). It is well
known that the propagator $U_0(t)=e^{-itH_0}$ for the unperturbed problem $\left(V=0\right)$
is a metaplectic operator \cite{folland}. By a slight abuse of language (essentially, up to a sign factor), we can suggestively write $U_0(t)=\mu\left(\mathcal{A}_{t}\right)$, where
$t\mapsto\mathcal{A}_{t}\in\mathrm{Sp}\left(d,\mathbb{R}\right)$
is the one-parameter subgroup of symplectic matrices associated with
the solution of the classical equations of motion with Hamiltonian
$H_{0}$ in phase space  and $\mu$ is the so-called metaplectic representation - see Section \ref{weyl sec} for the rigorous construction of $U_0(t)$. We express the block structure of $\cA_t$, namely
\[
\mathcal{A}_{t}=\left(\begin{array}{cc}
A_{t} & B_{t}\\
C_{t} & D_{t}
\end{array}\right),
\]
since our results are global in time unless certain exceptional values, namely for any $t\in \bR$ such that $\det B_t \ne 0$ (equivalently, for any $t\in \bR$ such that $\cA_t$ is a free symplectic matrix - cf.\ Section \ref{weyl sec}). Consequently, we also introduce the quadratic form
\begin{equation}\label{phit}
\Phi_{t}\left(x,y\right)=\frac{1}{2}xD_{t}B_{t}^{-1}x-yB_{t}^{-1}x+\frac{1}{2}yB_{t}^{-1}A_{t}y,\qquad x,y\in\mathbb{R}^{d}.
\end{equation}
Recall that (cf.\ \cite{hormander1 mehler}) $H_{0}$ is a self-adjoint
operator on its domain 
\[
D\left(H_{0}\right)=\{ \psi \in L^{2}(\mathbb{R}^{d})\,:\,H_{0}\psi \in L^{2}(\mathbb{R}^{d})\} .
\]
Hence, $V$ being bounded, the Trotter product
formula holds: 
\begin{equation}\label{trotter maint}
e^{-it\left(H_{0}+V\right)}=\lim_{n\rightarrow\infty}E_{n}(t),\qquad E_{n}(t)=\Big(e^{-i\frac{t}{n}H_{0}}e^{-i\frac{t}{n}V}\Big)^{n},
\end{equation}
where the convergence is again in the strong operator topology in
$L^{2}(\mathbb{R}^{d})$  (see e.g. \cite[Cor.\ 2.7]{engel}). We denote by $e_{n,t}(x,y)$
the distribution kernel of $E_{n}(t)$ and by $u_t(x,y)$ that of $e^{-it\left(H_{0}+V\right)}$.

In order to state our first result, we need to introduce two spaces of a marked harmonic analysis flavour, defined in terms of the decay of the Fourier transform.
Let $M_{s}^{\infty}(\mathbb{R}^{d})$, $s\in\mathbb{R}$,
denote the subspace of temperate distributions $f\in\mathcal{S}'(\mathbb{R}^{d})$
such that, for some non-zero Schwartz function $g\in\mathcal{S}(\mathbb{R}^{d})$,
\[
\left\Vert f\right\Vert _{M_{s}^{\infty}}=\sup_{x,\xi \in\mathbb{R}^{d}}\left|\mathcal{F}\left[f\cdot g\left(\cdot-x\right)\right]\left(\xi\right)\right|\left(1+\left|\xi\right|\right)^{s}<\infty,
\]
where $\mathcal{F}$ is the Fourier transform. In addition, consider
the space $\mathcal{F}L_{s}^{1}(\mathbb{R}^{d})$ of functions
with weighted integrable Fourier transforms, namely:
\[
f\in\mathcal{F}L_{s}^{1}(\mathbb{R}^{d})\quad\Leftrightarrow\quad\left\Vert f\right\Vert _{\mathcal{F}L_{s}^{1}}=\int_{\mathbb{R}^{d}}\left|\mathcal{F}f\left(\xi\right)\right|\left(1+\left|\xi\right|\right)^{s}d\xi<\infty.
\]
While the space $\mathcal{F}L^{1}_s$ is rather standard, $M_{s}^{\infty}$
is a special member of a family of Banach spaces, the so-called \textit{modulation spaces}, stemming from the
branch of harmonic analysis currently known as time-frequency analysis
(cf.\ Section \ref{sec mod sp} for the details). Modulation spaces proved to
be a fruitful environment for the study of PDEs, in particular the Schr\"odinger equation (see for instance \cite{CGNR jmp,CNR rough,CN pot mod,wang} and the references therein), and related problems such
as path integrals \cite{nicola1 conv lp,nicola2 ks,NT}. We have a convergence result for potentials in this space.
\begin{theorem}
	\label{maint minfty}Consider  $H_{0}$
	as specified above and $V\in M_{s}^{\infty}(\mathbb{R}^{d})$,
	with $s>2d$. For any $t\in\mathbb{R}$ such that $\det B_{t}\neq0$:
	\begin{enumerate}
		\item the distributions $e^{-2\pi i\Phi_{t}}e_{n,t}$, $n\geq 1$,
		and $e^{-2\pi i\Phi_{t}}u_t$
		belong to a bounded subset of $M_{s}^{\infty}(\mathbb{R}^{2d})$;
		\item $e_{n,t}\rightarrow u_t$ in $\left(\mathcal{F}L_{r}^{1}\right)_{\mathrm{loc}}(\mathbb{R}^{2d})$
		for any $0<r<s-2d$, hence uniformly on compact subsets. 
	\end{enumerate}
\end{theorem}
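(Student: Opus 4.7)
Plan of proof.

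The plan is to factor $E_n(t)$ as the product of the free metaplectic propagator $U_0(t)$ and a Weyl pseudodifferential remainder whose symbol is uniformly bounded in $M_s^\infty(\mathbb{R}^{2d})$, and then to translate this into a kernel statement by removing the free phase $e^{2\pi i\Phi_t}$. Setting $\tau:=t/n$ and $W_n:=e^{-i\tau V}$, I would first establish, by induction on $n$, the telescoping identity
\[
E_n(t)=\bigl(U_0(\tau)\,M_{W_n}\bigr)^n=U_0(t)\,\widetilde M_{n-1}^{(n)}\cdots\widetilde M_{1}^{(n)}\widetilde M_{0}^{(n)},\qquad \widetilde M_{k}^{(n)}:=U_0(-k\tau)\,M_{W_n}\,U_0(k\tau),
\]
where $M_{W_n}$ denotes multiplication by $W_n$. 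Since metaplectic conjugation pulls back Weyl symbols via the corresponding linear symplectic change of variables, the Weyl symbol of $\widetilde M_{k}^{(n)}$ is $w_k^{(n)}(x,\xi)=\exp\bigl(-i\tau V(\pi_1\mathcal{A}_{k\tau}(x,\xi))\bigr)$, with $\pi_1\colon\mathbb{R}^{2d}\to\mathbb{R}^d$ the projection onto the first $d$ coordinates.

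Writing $w_k^{(n)}=1+r_k^{(n)}$, three standard modulation-space facts give $\|r_k^{(n)}\|_{M_s^\infty(\mathbb{R}^{2d})}\le C|t|/n$ uniformly in $k$ and $n$: (a) since $s>d$, the space $M_s^\infty(\mathbb{R}^d)$ is a Banach algebra under pointwise multiplication, so $\|e^{-i\tau V}-1\|_{M_s^\infty(\mathbb{R}^d)}\lesssim \tau$ with implicit constant depending on $\|V\|_{M_s^\infty}$; (b) the tensorization $f\mapsto f\otimes 1$ embeds $M_s^\infty(\mathbb{R}^d)$ continuously into $M_s^\infty(\mathbb{R}^{2d})$; (c) the matrices $\mathcal{A}_{k\tau}$ with $k\tau\in[0,|t|]$ form a compact subset of $\mathrm{Sp}(d,\mathbb{R})$ and hence act uniformly boundedly on $M_s^\infty(\mathbb{R}^{2d})$ by pullback. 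The hypothesis $s>2d$ activates the Banach-algebra structure of $M_s^\infty(\mathbb{R}^{2d})$ under the Weyl twisted product (a polynomial-weighted refinement of Sj\"ostrand's algebra); expanding $\prod_{k}(I+\mathrm{Op}_W(r_k^{(n)}))$ yields
\[
E_n(t)=U_0(t)+U_0(t)\,\mathrm{Op}_W(\sigma_n),\qquad \|\sigma_n\|_{M_s^\infty(\mathbb{R}^{2d})}\le(1+C|t|/n)^n-1\le e^{C|t|}-1,
\]
bounded independently of $n$.

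At this stage I would invoke the algebra of generalized metaplectic operators (in the spirit of the framework used in the cited NT work and the earlier Cordero--Gr\"ochenig--Nicola--Rodino references): when $\det B_t\ne 0$, the assignment $\sigma\mapsto e^{-2\pi i\Phi_t}\cdot K$, with $K$ the kernel of $U_0(t)\,\mathrm{Op}_W(\sigma)$, is continuous from $M_s^\infty(\mathbb{R}^{2d})$ into itself. Combined with the explicit form $c_t\,e^{2\pi i\Phi_t}$ of the kernel of $U_0(t)$, this provides the uniform bound $\|e^{-2\pi i\Phi_t}e_{n,t}\|_{M_s^\infty(\mathbb{R}^{2d})}\le C(t,\|V\|_{M_s^\infty})$; strong $L^2$ convergence from the Trotter formula forces distributional convergence of the kernels, and weak-$\ast$ compactness of $M_s^\infty$ transfers the bound to $u_t$, settling (1). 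For (2) I would use the pointwise STFT estimate $|V_\chi h(0,\zeta)|\le\|h\|_{M_s^\infty}(1+|\zeta|)^{-s}$ for $\chi\in C_c^\infty(\mathbb{R}^{2d})$, combined with the identity $V_\chi h(0,\cdot)=\mathcal{F}(\chi h)$ and the integrability of $(1+|\zeta|)^{r-s}$ on $\mathbb{R}^{2d}$ whenever $r<s-2d$: this yields the continuous embedding $M_s^\infty(\mathbb{R}^{2d})\hookrightarrow(\mathcal{F}L^1_r)_{\mathrm{loc}}$, so that distributional convergence together with the uniform $M_s^\infty$-bound (the smooth unimodular factor $e^{\pm 2\pi i\Phi_t}$ acts boundedly locally on $\mathcal{F}L^1_r$) upgrades to convergence in $(\mathcal{F}L^1_r)_{\mathrm{loc}}$, and hence via $\mathcal{F}L^1_r\hookrightarrow C^0$ to uniform convergence on compact subsets. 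I expect the main technical hurdle to be the assertion that composition with $U_0(t)$ preserves the generalized-metaplectic class with $M_s^\infty$ symbols and uniform norm bounds; it reduces to a delicate oscillatory-integral analysis based on the quadratic phase difference $\Phi_t(x,z)-\Phi_t(x,y)$, in which the hypothesis $\det B_t\ne 0$ is decisive.
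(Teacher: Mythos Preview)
Your proposal is correct and follows essentially the same route as the paper: factor $E_n(t)$ as a metaplectic operator composed with a Weyl operator whose symbol is uniformly bounded in $M_s^\infty(\mathbb{R}^{2d})$ (via symplectic covariance, the pointwise algebra property, and the twisted-product algebra property), then pass to kernels and upgrade distributional convergence to $(\mathcal{F}L^1_r)_{\mathrm{loc}}$ via dominated convergence on the STFT. The step you anticipate as the main hurdle---showing that $\sigma\mapsto e^{-2\pi i\Phi_t}K_{U_0(t)\mathrm{Op}_W(\sigma)}$ is bounded on $M_s^\infty$---is the paper's Lemma~3.1(ii), and is dispatched there not by oscillatory-integral analysis of the phase difference but by an explicit factorization $\widetilde\sigma=\mathcal U_2\,\mathcal U\,\mathcal U_1\,\sigma$ into two linear coordinate changes and one chirp Fourier multiplier $\widehat{\mathcal U\sigma}(\xi,\eta)=e^{\pi i\xi\eta}\widehat\sigma(\xi,\eta)$, each of which is bounded on $M_s^\infty$.
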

We notice that for $s>2d$ we have $M_{s}^{\infty}(\mathbb{R}^{2d})\subset (\mathcal{F} L^1)_{\rm loc}(\rdd) \cap L^\infty(\R^{2d})$, so that the kernels $e_{n,t}$ and $u_t$ in the statement are in fact bounded and continuous functions, provided $\det B_{t}\neq0$. \par
Also,  $\cap_{s>0} M_{s}^{\infty}(\mathbb{R}^{d}) = C^{\infty}_b (\mathbb{R}^{d})$ is the space of bounded smooth functions with bounded derivatives of any order, which gives the following result. 
\begin{corollary}
	\label{maint s000} Let $H_{0}$ be
	as specified above and $V\in C^{\infty}_b(\mathbb{R}^{d})$. For any
	$t\in\mathbb{R}$ such that $\det B_{t}\neq0$:
	\begin{enumerate}
		\item the distributions $e^{-2\pi i\Phi_{t}}e_{n,t}$, $n\geq 1$,
		and $e^{-2\pi i\Phi_{t}}u_t$
		belong to a bounded subset of $C^{\infty}_b(\mathbb{R}^{2d})$;
		\item $e_{n,t}\rightarrow u_t$ in $C^{\infty}(\rdd)$,
		hence uniformly on compact  subsets together with any derivatives. 
	\end{enumerate}
\end{corollary}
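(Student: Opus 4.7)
The plan is to derive the corollary as a direct consequence of Theorem \ref{maint minfty} by letting the smoothness parameter $s$ tend to infinity, exploiting the identification $C^\infty_b(\rd) = \bigcap_{s>0} M_s^\infty(\rd)$ (and its analogue on $\rdd$) recalled just above the statement. Since $V\in C^\infty_b(\rd)$ belongs to $M_s^\infty(\rd)$ for every $s>0$, Theorem \ref{maint minfty} applies for every $s>2d$, and the two assertions will follow by simply collecting its conclusions over the full range of $s$.

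For part (1), I will apply Theorem \ref{maint minfty}(1) for each $s>2d$ to obtain a constant $C_s>0$, independent of $n$, such that both $\|e^{-2\pi i\Phi_t}e_{n,t}\|_{M_s^\infty(\rdd)}$ and $\|e^{-2\pi i\Phi_t}u_t\|_{M_s^\infty(\rdd)}$ are bounded by $C_s$. Since the Fréchet topology of $C^\infty_b(\rdd)$ is equivalently generated by the family of seminorms $\{\|\cdot\|_{M_s^\infty}\}_{s>0}$---the quantitative content of the identification $\bigcap_{s>0} M_s^\infty = C^\infty_b$, together with the standard embedding $M_s^\infty \hookrightarrow C^k_b$ valid for $s$ sufficiently large with respect to $k$---these uniform bounds in every $M_s^\infty$ translate into uniform bounds in every $C^k_b(\rdd)$ norm, that is, boundedness in $C^\infty_b(\rdd)$.

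For part (2), Theorem \ref{maint minfty}(2) yields convergence $e_{n,t}\to u_t$ in $(\mathcal{F}L_r^1)_{\mathrm{loc}}(\rdd)$ for every $0<r<s-2d$, so by letting $s\to\infty$ the convergence actually holds for every $r>0$. To upgrade this to $C^\infty$-convergence I will invoke the elementary embedding $\mathcal{F}L_r^1(\rdd)\hookrightarrow C^k(\rdd)$ valid whenever $r>k$: if $\|f\|_{\mathcal{F}L_r^1}<\infty$ then $\xi^\alpha\mathcal{F}f(\xi)\in L^1$ for every $|\alpha|\le k$, so each $\partial^\alpha f$ is continuous and bounded by a constant times $\|f\|_{\mathcal{F}L_r^1}$. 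Taking $r$ arbitrarily large thus promotes localized $\mathcal{F}L_r^1$-convergence to $C^k_{\mathrm{loc}}$-convergence for every $k\ge 0$, which is precisely convergence in $C^\infty(\rdd)$, i.e.\ uniform on compact sets together with all derivatives. I foresee no substantive obstacle: the proof is essentially a formal passage to the limit in $s$ inside the conclusions already provided by Theorem \ref{maint minfty}, with the burden of work (uniformity in $n$, and the crucial factoring out of $e^{-2\pi i\Phi_t}$) having been discharged there.
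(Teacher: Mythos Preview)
Your proposal is correct and follows essentially the same route as the paper: the paper's proof simply invokes the identifications $C^\infty_b(\rdd)=\bigcap_{s\ge 0} M^\infty_s(\rdd)$ and $C^\infty(\rdd)=\bigcap_{r>0}(\mathcal{F}L^1_r)_{\mathrm{loc}}(\rdd)$ and applies Theorem~\ref{maint minfty} for all $s>2d$, which is exactly what you do, only with more detail supplied on the embeddings $M^\infty_s\hookrightarrow C^k_b$ and $\mathcal{F}L^1_r\hookrightarrow C^k$.
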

The same conclusion of Corollary \ref{maint s000} is actually known
to hold true for short times, as a consequence of Fujiwara's result \cite{fujiwara2 duke}, but the above result is global in time. The occurrence of a set of exceptional times is to be expected: in these cases, the integral kernel of the propagator degenerates into a distribution. A basic example of this behaviour is given by the harmonic oscillator, that is\footnote{See Section \ref{prelres} for the choice of the normalization of the Weyl quantization and the classical flow.} $H_0=-\frac{1}{4\pi}\triangle +\pi |x|^2$, $V(x)=0$, at $t=k\pi$, $k\in\mathbb{Z}$. Notice that such exceptional values are exactly those for which the upper-right block of the associated Hamiltonian flow
\[ 
\cA_t = \left( \begin{array}{cc}
(\cos t) I & (\sin t)I \\ 
-(\sin t)I & (\cos t)I
\end{array} \right)
\] 
is non-invertible. 

We now state our main result, which is subtler than Theorem \ref{maint minfty} and applies to potentials in a lower regularity
space known as the Sj\"ostrand class $M^{\infty,1}(\mathbb{R}^{d})$:
we say that $f\in\mathcal{S}'(\mathbb{R}^{d})$ belongs
to $M^{\infty,1}(\mathbb{R}^{d})$ if \[
\left\Vert f\right\Vert _{M^{\infty,1}}=\int_{\mathbb{R}^{d}}\sup_{x\in\mathbb{R}^{d}}\left|\mathcal{F}\left[f\cdot g\left(\cdot-x\right)\right]\left(\xi\right)\right|d\xi<\infty,
\] for some non-zero $g\in\mathcal{S}(\mathbb{R}^{d})$. As a rule of thumb, a function in $M^{\infty,1}(\rd)$ is bounded on $\rd$ and locally enjoys the mild regularity of the Fourier transform of an $L^1$ function; in fact 
\[
(M^{\infty,1})_{\rm loc}(\rd) = (\mathcal{F}L^1)_{\rm loc}(\rd).
\]
Furthermore, we have the following chain of strict inclusions for $s>d$:
\[ C^{\infty}_b(\rd)\subset M^{\infty}_s(\rd) \subset M^{\infty,1}(\rd) \subset (\mathcal{F}L^1)_{\rm loc}(\rd)\cap L^\infty(\rd)\subset C^0(\rd)\cap L^\infty(\rd).
\] 
Intuitively: we have a scale of low-regularity spaces, the functions in $M^{\infty}_s(\rd)$ becoming less regular as $s\searrow d$, until the (fractional) differentiability is completely lost in the ``maximal" space $M^{\infty,1}(\rd)$. 

 It seems worth to highlight that results on the convergence of path integrals are already known for special
elements of the Sj\"ostrand class: for instance, a class of potentials widely
investigated by means of different approaches in the papers of Albeverio and coauthors \cite{albeverio1 inv,albeverio2 trace,albeverio3 jfa} and It\^o \cite{ito2} (see also \cite{ito1}) is $\cF\cM(\rd)$, namely the space of Fourier transforms of (finite) complex measures on $\rd$. In fact, we have $\cF\cM(\rd) \subset M^{\infty,1}(\rd)$, cf.\ Proposition \ref{fou meas sjo} below, and the above inclusion is strict; for instance, $f(x)=\cos |x|$, $x\in \rd$, clearly belongs to $C^{\infty}_b(\rd)$, but it is easy to realize that $f\notin \cF\cM(\rd)$ as soon as $d>1$, by the known formula for the fundamental solution of the wave equation \cite{evans}.

The following result encompasses these potentials and ultimately yields
the desired pointwise convergence \textit{at the level of integral kernels} for a wide class of non-smooth potentials.
\begin{theorem}
	\label{maint sjo} Let $H_{0}$ be
	as specified above and $V\in M^{\infty,1}(\mathbb{R}^{d})$.
	For any $t\in\mathbb{R}$ such that $\det B_{t}\neq0$:
	\begin{enumerate}
		\item the distributions $e^{-2\pi i\Phi_{t}}e_{n,t}$, $n\geq 1$,
		and $e^{-2\pi i\Phi_{t}}u_t$
		belong to a bounded subset of $M^{\infty,1}(\mathbb{R}^{2d})$;
		\item $e_{n,t}\rightarrow u_t$ in $\left(\mathcal{F}L^{1}\right)_{\mathrm{loc}}(\rdd)$,
		hence uniformly on compact subsets. 
	\end{enumerate}
\end{theorem}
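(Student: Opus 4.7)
The plan is to reduce everything to a uniform estimate on the Weyl symbols of the operators involved, inside the Banach algebra $M^{\infty,1}(\mathbb{R}^{2d})$, and then convert this symbol bound into the sought kernel bound after extracting the metaplectic phase $e^{2\pi i \Phi_t}$.

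\emph{Interaction picture and symbol representation.} Setting $U_0(s)=\mu(\mathcal{A}_s)$ and using the symplectic covariance $\mu(\mathcal{A})^{-1}\sigma^w\mu(\mathcal{A})=(\sigma\circ\mathcal{A})^w$, I would telescope the Trotter product to obtain
\[
U_0(-t)\,E_n(t)\;=\;(b_{n,n-1})^w\,(b_{n,n-2})^w\cdots(b_{n,0})^w,\qquad b_{n,k}(x,\xi)=e^{-i(t/n)V(A_{kt/n}x+B_{kt/n}\xi)},
\]
so that $U_0(-t)E_n(t)=B_n(t)^w$ with $B_n(t)=b_{n,n-1}\,\#\,b_{n,n-2}\,\#\cdots\#\,b_{n,0}$ under the Weyl twisted product $\#$. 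Analogously, a Dyson series driven by the time-dependent symbol $V\circ\mathcal{A}_s$ represents $U_0(-t)\,e^{-it(H_0+V)}=B(t)^w$. The problem then splits into (a) a uniform $M^{\infty,1}$-bound for $B_n(t)$ and $B(t)$, and (b) the conversion from symbol to dephased kernel for $\mu(\mathcal{A}_t)\sigma^w$.

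\emph{Symbol bound and kernel transfer.} For (a), I would combine three Banach-algebra tools: Sj\"ostrand's Wiener-type theorem, which makes $M^{\infty,1}$ a Banach algebra under $\#$; the boundedness of $\sigma\mapsto\sigma\circ\mathcal{S}$ on $M^{\infty,1}(\mathbb{R}^{2d})$ for $\mathcal{S}\in\Spdr$, with norms continuous in $\mathcal{S}$; and a pointwise Banach-algebra estimate of the form $\|e^{-isW}-1\|_{M^{\infty,1}}\le C|s|\|W\|_{M^{\infty,1}}e^{C|s|\|W\|_{M^{\infty,1}}}$ applied to $W=V\circ(A_\bullet,B_\bullet)$. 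Together these yield $\|b_{n,k}\|_{M^{\infty,1}}\le 1+C_T/n$ for $t\in[-T,T]$ and hence $\|B_n(t)\|_{M^{\infty,1}}\le(1+C_T/n)^n\le C'_T$ by Banach-algebra iteration, with the same bound surviving in the limit for $B(t)$. For (b), one computes the kernel of $\mu(\mathcal{A}_t)\sigma^w$ by inserting the explicit formulas $K_{\mu(\mathcal{A}_t)}(x,z)=c_t e^{2\pi i\Phi_t(x,z)}$ and $K_{\sigma^w}(z,y)=\int\sigma(\tfrac{z+y}{2},\xi)e^{2\pi i(z-y)\xi}\,d\xi$, exploiting the symmetry of $B_t^{-1}A_t$ to rewrite $\Phi_t(x,z)-\Phi_t(x,y)$ so that
\[
e^{-2\pi i\Phi_t(x,y)}K(x,y)\;=\;(\mathcal{T}_{\mathcal{A}_t}\sigma)(x,y),
\]
with $\mathcal{T}_{\mathcal{A}_t}$ a partial Fourier transform composed with an invertible affine change of variables in phase space, i.e.\ a metaplectic-type operator acting on functions on $\mathbb{R}^{2d}$, hence bounded on $M^{\infty,1}(\mathbb{R}^{2d})$. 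Combining (a) and (b) proves item~(1).

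\emph{Convergence and main obstacle.} For item~(2), the classical Trotter formula already gives $E_n(t)\varphi\to e^{-it(H_0+V)}\varphi$ strongly in $L^2$ for each $\varphi$, hence $e_{n,t}\to u_t$ in $\mathcal{S}'(\mathbb{R}^{2d})$. For any $\chi\in C_c^\infty(\mathbb{R}^{2d})$, item~(1) together with the embedding $M^{\infty,1}\subset(\mathcal{F}L^1)_{\mathrm{loc}}$ makes $\chi e_{n,t}$ bounded in $\mathcal{F}L^1(\mathbb{R}^{2d})$ and convergent to $\chi u_t$ in $\mathcal{S}'$; a dominated-convergence argument on the STFT then upgrades this to $\mathcal{F}L^1$-convergence of $\chi e_{n,t}$, yielding the $(\mathcal{F}L^1)_{\mathrm{loc}}$-convergence claimed in~(2). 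The most delicate step is~(a): unlike in Theorem~\ref{maint minfty}, where the decay weight $(1+|\xi|)^s$ with $s>2d$ provides absolutely convergent integrals with ample room, at the borderline regularity of $M^{\infty,1}$ one has no such margin and must genuinely exploit the Banach-algebra structure (twisted product, symplectic invariance, pointwise holomorphic functional calculus) to keep constants uniform through the $n$-fold Trotter composition.
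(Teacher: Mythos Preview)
Your treatment of item~(1) is essentially the same as the paper's: you factor out the metaplectic propagator, write $E_n(t)$ as a Weyl operator with symbol given by an $n$-fold twisted product, and iterate the Banach-algebra estimate $\|e^{-i(t/n)W}\|_{M^{\infty,1}}\le 1+C_T/n$ to get $(1+C_T/n)^n\le e^{C_T}$. The conversion step (your point~(b)) is also what the paper does via Lemma~3.1(ii). So far no objection.

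The gap is in item~(2). You claim that from the uniform bound in $M^{\infty,1}$ and the convergence in $\mathcal{S}'$ one can pass to $(\mathcal{F}L^1)_{\mathrm{loc}}$ convergence ``by a dominated-convergence argument on the STFT''. But a uniform $M^{\infty,1}$ bound only says that $\int_{\mathbb{R}^{2d}}\sup_z|V_\Psi k_{n,t}(z,\zeta)|\,d\zeta\le C$; it does \emph{not} produce a single $L^1$ function $H(\zeta)$, independent of~$n$, dominating $\zeta\mapsto V_\Psi k_{n,t}(z,\zeta)$ for fixed~$z$. Equivalently, $\{\widehat{\chi e_{n,t}}\}$ bounded in $L^1$ together with pointwise convergence does not imply $L^1$ convergence without uniform integrability, which you have no way to verify here. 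This is exactly the ``no margin'' problem you flag at the end, and the $\#$-algebra structure alone does not rescue it.

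The paper's remedy is to treat Theorem~\ref{maint sjo} as a perturbation of Theorem~\ref{maint minfty}. One first proves (Lemma~3.3) that any $V\in M^{\infty,1}$ splits as $V=V_1+V_2$ with $V_1\in C^\infty_b\subset M^\infty_s$ (any $s$) and $\|V_2\|_{M^{\infty,1}}\le\epsilon$. Correspondingly the symbol product splits as $a_{n,t}+b_{n,t}$: the $a_{n,t}$ piece lies in $M^\infty_s$, $s>2d$, which \emph{does} give a uniform pointwise majorant $C(t,\epsilon)\langle\zeta\rangle^{-s}\in L^1$ for its STFT, so dominated convergence applies to it; the $b_{n,t}$ piece satisfies $\|b_{n,t}\|_{M^{\infty,1}}\le \epsilon\,C(t)$ uniformly in $n$, so its contribution to $\|\mathcal{F}[(k_{n,t}-k_t)\overline{T_z\Psi}]\|_{L^1}$ is at most $2\epsilon\,C(t)$ by Fatou. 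A $\limsup$ argument (Lemma~5.2 in the paper) then gives $\limsup_n\le 2\epsilon\,C(t)$, and since $\epsilon$ is arbitrary the limit is $0$. Without this decomposition, your dominated-convergence step does not go through.
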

Let us conclude this introduction with a few words on the techniques
employed for the proofs. The main idea is to rephrase the problem
in terms of pseudodifferential calculus and then to exploit the very rich structure
enjoyed by the modulation spaces $M_{s}^{\infty}(\mathbb{R}^{2d})$
(with $s>2d$) and $M^{\infty,1}(\mathbb{R}^{2d})$: in
particular, they are Banach algebras for both pointwise multiplication
and twisted product of symbols for the Weyl quantization - see the
subsequent Section \ref{weyl sec} for the details.  \par
There is a certain number of questions which seem worthy of further consideration. For example, Theorem \ref{maint minfty} and Corollary \ref{maint s000} should hopefully extend to Hamiltonians $H_0$ given by the Weyl quantization of a smooth real-valued function with derivatives of order $\geq 2$ bounded, using techniques from \cite{nicola1 conv lp}. Also, the potential $V$ could be replaced by a genuine pseudodifferential operator in suitable classes. We preferred to avoid further technicalities here, since the arguments below are already somewhat involved. Finally we observe that the techniques introduced in the present paper could hopefully be useful to study similar convergence problems of the integral kernels for other approximation formulas arising in semigroup theory; cf.\ \cite{engel}.\par\medskip
The paper is organized as follows. Sections 2 and 3 are both devoted to preliminary results and technical lemmas on function spaces and operators involved. In Section 4 we prove Theorem \ref{maint minfty} and Corollary \ref{maint s000}. Theorem \ref{maint sjo} is proved in Section 5. 

\section{Preliminary results}\label{prelres}
\subsection{Notation} We define $x^2=x\cdot x$, for $x\in\mathbb{R}^d$, and
$xy=x\cdot y$ is the scalar product on $\mathbb{R}^{d}$. The Schwartz class is denoted by  $\mathcal{S}(\mathbb{R}^{d})$, the space of temperate distributions by  $\mathcal{S}'(\mathbb{R}^{d})$. The brackets  $\langle  f,g\rangle $ denote the extension to $\cS' (\mathbb{R}^{d})\times \cS (\mathbb{R}^{d})$ of the inner product $\langle f,g\rangle=\int_{\rd} f(x){\overline {g(x)}}dx$ on $L^2(\mathbb{R}^{d})$. 

The conjugate exponent $p'$ of $p \in [1,\infty]$ is defined by $1/p+1/p'=1$. The symbol $\lesssim$ means that the underlying inequality holds up to a positive constant factor $C>0$.
For any $x\in\mathbb{R}^{d}$ and $s\in\mathbb{R}$
we set $\left\langle x\right\rangle ^{s}\coloneqq\left(1+\left|x\right|^{2}\right)^{s/2}$.
We choose the following normalization for the Fourier transform:
\[
\mathcal{F}f\left(\xi\right)=\int_{\mathbb{R}^{d}}e^{-2\pi ix\xi}f(x) d x,\qquad\xi\in \rd. \]
We define the translation and modulation operators: for
any $x,\xi \in\mathbb{R}^{d}$ and $f\in\mathcal{S}(\mathbb{R}^{d})$,
\[
\left(T_{x}f\right)\left(y\right)\coloneqq f(y-x),\qquad\left(M_{\xi}f\right)(y)\coloneqq e^{2 \pi i \xi y}f(y).
\]
These operators can be extended by duality on temperate distributions.
The composition $\pi(x,\xi)=M_\xi T_x$ constitutes a so-called time-frequency shift.

Denote by $J$ the canonical symplectic matrix in $\mathbb{R}^{2d}$:
\[
J=\left(\begin{array}{cc}
0_d & I_d\\
-I_d & 0_d
\end{array}\right)\in\mathrm{Sp}(d,\mathbb{R}),
\]
where  the
symplectic group $\mathrm{Sp}\left(d,\mathbb{R}\right)$ is defined
as:
\[ \mathrm{Sp}\left(d,\mathbb{R}\right)=\left\{M\in \mathrm{GL}(2d,\bR) \,:\;M^{\top}JM=J\right\}
\]
and the associated Lie algebra is
\[ \mathfrak{sp}(d,\bR) \coloneqq \{ M \in \bR^{2d\times 2d}\,:\, MJ+JM^{\top}=0\}. \]

\subsection{Modulation spaces} \label{sec mod sp}  
The short-time Fourier transform (STFT) of a temperate distribution $f\in\cS'(\rd)$ with respect to the window function $g \in \cS(\rd)\setminus\{0\}$ is defined by:
\begin{equation}\label{STFTdef}
V_gf (x,\xi)\coloneqq \langle f, \pi(x,\xi) g\rangle=\Fur (f\cdot T_x g)(\xi)=\int_{\rd}e^{-2\pi iy \xi }
f(y)\, {\overline {g(y-x)}}\,dy.
\end{equation}

This is a key instrument for time-frequency analysis; the monograph \cite{gro1 book} contains a comprehensive treatment of its mathematical properties, especially those mentioned below. For the sake of conciseness, we only mention that the STFT is deeply connected with other well-known phase-space transforms, in particular the Wigner transform
$$W(f,g)(x,\xi )=\int_{\mathbb{R}^{d}}e^{-2\pi iy \xi }f\left(x+\frac{y}{2}\right)%
\overline{g\left(x-\frac{y}{2}\right)}\ dy. $$
For this and other aspects of the connection with phase space analysis, we recommend \cite{dG symp met}. 

Given a non-zero window $g\in\cS(\rd)$, $s\in \bR$ and $1\leq p,q\leq
\infty$, the {\it
	modulation space} $M^{p,q}_s(\rd)$ consists of all temperate
distributions $f\in\sch'(\rd)$ such that $V_gf\in L^{p,q}_s(\rdd)$ (mixed weighted Lebesgue space), that is: 
$$
\|f\|_{M^{p,q}_s}=\|V_gf\|_{L^{p,q}_s}=\left(\int_{\rd}
\left(\int_{\rd}|V_gf(x,\xi)|^p\,
dx\right)^{q/p} \la \xi \ra^{qs} d\xi \right)^{1/q}  \, <\infty ,
$$
with trivial adjustments if $p$ or $q$ is $\infty$. 
If $p=q$, we write $M^p$ instead of $M^{p,p}$. For the unweighted case, corresponding to $s=0$, we omit the dependence
on $s$: $M_{0}^{p,q}\equiv M^{p,q}$. 

It can be proved that $M^{p,q}_s(\rd)$ is a Banach space whose definition does not depend on the choice of the window $g$. Just to get acquainted with this family, it is worth to mention that many common function spaces can be equivalently designed as modulation spaces: for instance,  
\begin{enumerate}[label=(\roman*)]
	\item $M^2(\rd)$ coincides with the Hilbert space $L^2(\rd)$;
	\item $M^2_s(\rd)$ coincides with the usual $L^2$-based Sobolev space $H^s(\rd)$;
	\item the following continuous embeddings with Lebesgue spaces hold:
	\[ M^{p,q}_r(\rd) \hookrightarrow L^p(\rd) \hookrightarrow M^{p,q}_s(\rd), \qquad r>d/q' \text{ and } s<-d/q. \] In particular, 
	\[ M^{p,1}(\rd) \hookrightarrow L^p(\rd) \hookrightarrow M^{p,\infty}(\rd).\] 
\end{enumerate}
For these and other properties we address the reader to \cite{fei new segal,fei modulation 83,gro1 book}. 

For a fixed window $g \in \cS(\rd)\setminus\{0\}$, the STFT operator $V_g$ is clearly bounded from $M^{p,q}_s(\rd)$ to $L^{p,q}_s(\rdd)$. The \textit{adjoint operator} of $V_g$, defined by \[ V_g^* F = \int_{\rdd} F(x,\xi) \pi(x,\xi) g \, dxd\xi, \] continuously maps the Banach space $L^{p,q}_s(\rdd)$ into $M^{p,q}_s(\rd)$, the integral above to be intended in a weak sense.\par
The \textit{inversion formula for the STFT} can be conveniently expressed as follows: for any $f \in M^{p,q}_s(\rd)$,
\begin{equation} \label{inversion STFT}
f = \frac{1}{\| g \|_{L^2}^2} V_g^* V_g f,
\end{equation}
again in a weak sense. \par
The Sj\"ostrand's class, originally defined in \cite{sjo}, coincides with the choice $p=\infty$, $q=1$, $s=0$. We have that $M^{\infty,1}(\rd)\subset C^0(\rd) \cap L^{\infty}(\rd)$ and it is a Banach algebra under pointwise product. In fact, precise conditions are known on $p$, $q$ and $s$ in order for $M_{s}^{p,q}$
to be a Banach algebra with respect to pointwise multiplication:
\begin{lemma}[{\cite[Thm.\ 3.5]{rs mod}}] \label{Mpqs ban alg}Let $1\le p,q\le\infty$ and $s\in\mathbb{R}$. The modulation space $M_{s}^{p,q}(\mathbb{R}^{d})$ is
	a Banach algebra for pointwise multiplication if and only if either
	$s=0$ and $q=1$ or $s>d/q'$. 
\end{lemma}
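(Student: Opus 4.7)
My approach is to reduce the algebra property to a convolution estimate for the STFT of a product and then, under each of the two hypotheses, verify the corresponding embedding into the Sjöstrand class $M^{\infty,1}$. I would start by picking a non-zero window $g\in\mathcal{S}(\rd)$ factored as $g=g_{1}g_{2}$ with $g_{1},g_{2}\in\mathcal{S}(\rd)$; a direct computation from \eqref{STFTdef} yields the identity
\[
V_{g}(fh)(x,\xi)=\big(V_{g_{1}}f(x,\cdot)\ast V_{g_{2}}h(x,\cdot)\big)(\xi),\qquad f,h\in\mathcal{S}'(\rd),
\]
where the convolution is in the $\xi$ variable (this follows from $\mathcal{F}(\varphi\chi)=\mathcal{F}\varphi\ast\mathcal{F}\chi$). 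For $s\geq 0$ I would use the subadditive form of Peetre's inequality $(1+|\xi|)^{s}\leq 2^{s}[(1+|\eta|)^{s}+(1+|\xi-\eta|)^{s}]$ to load the weight on one factor at a time. Taking $L^{p}_{x}$-norms by Hölder with exponents $(p,\infty)$ (respectively $(\infty,p)$), so that the weighted factor carries the $L^{p}$-norm, and then applying Young's inequality $L^{q}\ast L^{1}\hookrightarrow L^{q}$ in $\xi$, I obtain
\[
\|fh\|_{M^{p,q}_{s}}\leq C\big(\|f\|_{M^{p,q}_{s}}\|h\|_{M^{\infty,1}}+\|f\|_{M^{\infty,1}}\|h\|_{M^{p,q}_{s}}\big).
\]

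This reduces the algebra property to the embedding $M^{p,q}_{s}(\rd)\hookrightarrow M^{\infty,1}(\rd)$, which I would handle in two steps. First, the standard inclusion $M^{p,q}_{s}\hookrightarrow M^{\infty,q}_{s}$ comes from the Gabor characterization of modulation spaces and the trivial $\ell^{p}\hookrightarrow\ell^{\infty}$ on coefficient sequences. If $s=0$ and $q=1$, the target space is already $M^{\infty,1}$ and we are done. Otherwise, when $s>d/q'$, a single Hölder step in $\xi$ gives
\[
\|f\|_{M^{\infty,1}}\leq \|f\|_{M^{\infty,q}_{s}}\Big(\!\int_{\rd}(1+|\xi|)^{-sq'}\,d\xi\Big)^{1/q'},
\]
and the last factor is finite precisely when $s>d/q'$. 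Plugging the resulting embedding into the previous estimate completes the sufficiency.

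For the necessity, the natural strategy is to produce, in the parameter regime complementary to the hypotheses, explicit Gabor-type counterexamples $f=\sum_{k\in\mathbb{Z}^{d}}a_{k}\pi(\lambda_{k})\varphi$ for a Schwartz atom $\varphi$ and a well-separated lattice $(\lambda_{k})$, with coefficient sequence $(a_{k})$ tuned at the exact summability threshold so that $f\in M^{p,q}_{s}$ while $f^{2}\notin M^{p,q}_{s}$; the mechanism is the same weight loss appearing in the Hölder step above, but now running in the opposite direction. The part I expect to be hardest is precisely this sharpness at the critical endpoint $s=d/q'$, where the divergence is only logarithmic and the choice of $(a_{k})$ must be finely tuned; a separate duality argument, testing the putative product against $M^{p',q'}_{-s}$, would be needed to rule out the algebra property in the subregime $s<0$, where pointwise multiplication of two elements need not even be defined \emph{a priori}.
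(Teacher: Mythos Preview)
The paper does not give its own proof of this lemma: it is quoted verbatim from \cite[Thm.\ 3.5]{rs mod}, and immediately afterwards the paper only adds the remark that the stated conditions are equivalent to the embedding $M^{p,q}_s\hookrightarrow L^\infty$ (citing \cite[Cor.\ 2.2]{rs mod}). So there is no in-paper argument to compare your proposal against.

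That said, your sufficiency argument is correct and is essentially the standard one: the STFT product--convolution identity $V_{g_1g_2}(fh)(x,\cdot)=V_{g_1}f(x,\cdot)\ast V_{g_2}h(x,\cdot)$, the additive Peetre splitting of the weight, H\"older in $x$ with the weighted factor carrying the $L^p$-norm, and Young in $\xi$ give exactly the bilinear estimate you wrote. The reduction to $M^{p,q}_s\hookrightarrow M^{\infty,1}$ and the verification of that embedding via $M^{p,q}_s\hookrightarrow M^{\infty,q}_s$ plus a H\"older step in $\xi$ when $s>d/q'$ are also correct. Note that your route pivots on the embedding into $M^{\infty,1}$, while the paper's remark phrases the equivalent condition as embedding into $L^\infty$; these coincide under the hypotheses. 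Your necessity discussion is only a sketch: the Gabor-atom counterexamples and the endpoint $s=d/q'$ require the careful constructions carried out in \cite{rs mod}, and the case $s<0$ (where the product may fail to be defined) is not really argued. For the purposes of this paper that is harmless, since only the ``if'' direction is ever used.
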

Therefore, the Sj\"ostrand's class $M^{\infty,1}(\mathbb{R}^{d})$
and the modulation spaces $M_{s}^{\infty}(\mathbb{R}^{d})$
with $s>d$ are Banach algebras for pointwise multiplication. It is
worth to point out that the condition required in Lemma \ref{Mpqs ban alg}
are in fact equivalent to assume $M_{s}^{p,q}\hookrightarrow L^{\infty}$
- cf.\ \cite[Cor.\ 2.2]{rs mod}. 

\begin{remark} We clarify once for all that the preceding results
concern the conditions under which the embedding $M_{s}^{p,q}\cdot M_{s}^{p,q}\hookrightarrow M_{s}^{p,q}$
is continuous, hence there exists a constant $C>0$ such that
\[
\left\Vert fg\right\Vert _{M_{s}^{p,q}}\le C\left\Vert f\right\Vert _{M_{s}^{p,q}}\left\Vert g\right\Vert _{M_{s}^{p,q}},\qquad\forall f,g\in M_{s}^{p,q}.
\]
Thus, the algebra property holds up to a constant. It is a well known
general fact that one can provide an equivalent
norm for which the previous estimate holds with $C=1$ and the unit element of the algebra has norm equal to $1$ (cf.\ \cite[Thm.\ 10.2]{rudin fa}). From
now on, we assume to work with such equivalent norm whenever concerned
with a Banach algebra. 
\end{remark}

An important subspace of both $M^{\infty,1}(\rd)$ and $M^{\infty}_s(\rd)$ is the space
\[
C^{\infty}_b(\mathbb{R}^{2d})\coloneqq\left\{ f\in C^{\infty}(\mathbb{R}^{2d})\,:\,\left|\partial^{\alpha}f\right|\leq C_{\alpha}\quad\forall\alpha\in\mathbb{N}^{2d}\right\}=\bigcap_{s\ge0}M_{s}^{\infty}(\mathbb{R}^{d});\]
see e.g. \cite[Lem.\ 6.1]{gro3 rze} for this characterization.

 We briefly mention that the image of modulation spaces under Fourier transform yields another important family of function spaces for the purposes of real harmonic analysis, which are a very special type of \textit{Wiener amalgam spaces}: for any $1\le p,q \le \infty$, we set 
\[
  W^{p,q}(\rd)  \coloneqq \cF M^{p,q}(\rd). 
\]
One can prove that $W^{p,q}(\rd)$ is a Banach space under the same norm of $M^{p,q}(\rd)$ but with flipped order of integration with respect to the time and frequency variables:  \[\|f\|_{W^{p,q}} \coloneqq \left(\int_{\rd}
\left(\int_{\rd}|V_gf(x,\xi)|^p (\xi)\,
d\xi \right)^{q/p} d x\right)^{1/q},
\]
for $g\in \cS(\rd)\setminus \{0\}$, as usual.
  
\subsection{Weyl operators} \label{weyl sec}
The usual definition of the Weyl transform of the symbol $\sigma:\mathbb{R}^{2d}\rightarrow\mathbb{C}$
is
\[
\sigma^{\text{w}}f\left(x\right)\coloneqq\int_{\mathbb{R}^{2d}}e^{2\pi i\left(x-y\right)\xi}\sigma\left(\frac{x+y}{2},\xi\right)f\left(y\right)dyd\xi.
\]
The meaning of this formal integral operator heavily relies on the
function space to which the symbol $\sigma$ belongs. Instead, we
adopt the following definition via duality for symbols $\sigma\in\mathcal{S}'(\mathbb{R}^{2d})$:
\begin{equation}
\sigma^{\mathrm{w}}:\mathcal{S}(\mathbb{R}^{d})\rightarrow\mathcal{S}'(\mathbb{R}^{d}),\quad\quad\langle \sigma^{\mathrm{w}}f,g\rangle =\langle \sigma,W(g,f)\rangle ,\qquad\forall f,g\in\mathcal{S}(\mathbb{R}^{d}).\label{def wig dual}
\end{equation}
In particular, $M^{\infty,1}(\mathbb{R}^{2d})$ and $M_{s}^{\infty}(\mathbb{R}^{2d})$
are suitable symbol classes. It is worth to mention that the classical
symbol classes investigated within the long tradition of pseudodifferential
calculus are usually defined by means of decay/smoothness conditions
(see for instance the general H\"ormander classes $S_{\rho,\delta}^{m}(\mathbb{R}^{2d})$ - \cite{hormander2 book 3}),
while the fruitful interplay with time-frequency analysis allows to
cover very rough symbols too - cf.\ \cite{gro ped}.
 
\begin{remark}
	\label{mult symbol}Notice that the multiplication by $V(x)$
	is a special example of Weyl operator with symbol \[ \sigma_V \left(x,\xi\right)=V\left(x\right)=(V\otimes 1)(x,\xi), \quad (x,\xi)\in \rdd.\] 
	It is not difficult to prove that the correspondence $V\mapsto \sigma_V$ is continuous from $M^{\infty}_s(\rd)$ (resp. $M^{\infty,1}(\rd)$) to $M^{\infty}_s(\rdd)$ (resp. $M^{\infty,1}(\rdd)$). 
	In the rest of the paper this identification shall be implicitly assumed; by a slight abuse of notation, we will write $V$ also for $\sigma_V^{\mathrm{w}}$ for the sake of legibility.  
\end{remark} 
\noindent
The composition of Weyl transforms provides a bilinear form on symbols, the so-called \emph{twisted product}:
\[ \sigma^{\mathrm{w}} \circ \rho^{\mathrm{w}} = (\sigma \# \rho)^{\mathrm{w}}. \]

Although explicit formulas for the twisted product of symbols can be derived (cf.\ \cite{wong}), we will not need them hereafter. Anyway, this is a fundamental notion in order to establish an algebra structure on symbol spaces. It is a distinctive property of $M^{\infty,1}(\mathbb{R}^{2d})$, as well as $M_{s}^{\infty}(\mathbb{R}^{2d})$ with $s>2d$,
to enjoy a double Banach algebra structure: \begin{itemize}
	\item a commutative one with respect to the pointwise multiplication as detailed above;
	\item a non-commutative one with respect to the twisted product of symbols (\cite{gro3 rze,sjo}). 
\end{itemize} 

We wish to underline that the latter algebra structure has been deeply investigated from a time-frequency analysis perspective. Indeed, it is subtly related to a characterizing property satisfied by pseudodifferential operators with symbols in those spaces, namely \textit{almost diagonalization} with respect to time-frequency shifts: we have $\sigma\in M^{\infty}_s(\rdd)$ if and only if
\[ \lvert \la \sigma^{\mathrm{w}} \pi(z) \varphi, \pi(w)\varphi \ra \rvert \le C\la w-z \ra^{-s}, \quad \forall \varphi \in \cS(\rd)\setminus\{0\},\ z,w\in \rd. \]
Similarly, $\sigma\in M^{\infty,1}(\rdd) $ if and only if there exists $H\in L^1(\rdd)$ such that
\[ \lvert \la \sigma^{\mathrm{w}} \pi(z) \varphi, \pi(w)\varphi \ra \rvert \le H(w-z), \quad \forall \varphi \in \cS(\rd)\setminus\{0\},\ z,w\in \rd. \]
We address the reader to \cite{CGNR fio,CGNR jmp,CNT 18,gro2 sj,gro3 rze} for further discussions on these aspects.  
\begin{remark}\label{rem algebra}
To unambiguously fix the notation: whenever concerned with a product of elements $a_1,\ldots,a_N$ in a Banach algebra $(A,\star)$, we write \[
\prod_{k=1}^N a_k \coloneqq a_1 \star a_2 \star \ldots \star a_N.
\]
This relation is meant to hold even when $(A,\star)$ is a non-commutative algebra, provided that the symbol on the LHS exactly designates the ordered product on the RHS.  
\end{remark}

\subsection{Metaplectic operators}\label{mataop} Given a symplectic matrix $\cA \in \Spdr$, we say that the unitary operator $\mu(\cA) \in \cU(L^2(\rd))$ is a \textit{metaplectic operator} associated with $\cA$ if it does satisfy the following intertwining relation:
\[
\pi(\cA z)=\mu(\cA) \pi(z) \mu(\cA)^{-1},\quad \forall z \in \rdd.
\]
Strictly speaking, the previous formula defines a whole set of unitary operators up to a constant phase factor: $ \{ c_{\cA}\mu(\cA) : c_{\cA} \in \bC, \lvert c_{\cA} \rvert = 1\}$. The phase factor can be adjusted to either $c_{\cA}=1$ or $c_{\cA}=-1$, namely:
\[ \mu(AB) = \pm \mu(A)\mu(B), \quad \forall A,B\in \Spdr. \] That is, $\mu$ provides a double-valued unitary representation of $\Spdr$ or, better, a representation of the double covering $\Mpdr$ of $\Spdr$; we will denote by $\rho^{\mathrm{Mp}}:  \Mpdr\to\Spdr$ the projection. We refer to \cite{dG symp met,folland} for a comprehensive discussion of these aspects. 

We confine ourselves to recall that the metaplectic operator corresponding to special symplectic matrices can be explicitly written as a quadratic Fourier transform. We say that $\cA \in \Spdr$, with \[ \cA= \left( \begin{array}{cc}
A & B \\ C & D \end{array} \right), \] is a \textit{free symplectic matrix} whenever $\det B \neq 0$. We have the following integral formula\footnote{We underline that the following quadratic Fourier transform, up to a sign in $\Phi_{\cA}$, is actually the point of departure for the construction of the metaplectic representation in \cite{dG symp met}.} for $\mu(\cA)$.

\begin{theorem}[{\cite[Sec.\  7.2.2]{dG symp met}}] Let $\cA \in \Spdr$ be a free symplectic matrix. Then,
	\begin{equation} \label{met int formula} \mu(\cA)f(x) = c \lvert \det B\rvert^{-1/2} \int_{\rd} e^{2\pi i \Phi_{\cA}(x,\xi)} f(y) dy, \qquad f \in \cS(\rd), \end{equation} where $c\in \bC$ is a suitable complex factor of modulus $1$ and $\Phi_{\cA}$ is the quadratic form given by
\[ \Phi_{\cA}(x,y) = \frac{1}{2} xDB^{-1}x - y B^{-1}x + \frac{1}{2}y B^{-1}A y. \]
\end{theorem}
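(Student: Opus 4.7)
The plan is to exploit the essential uniqueness of metaplectic operators: any unitary $T$ on $L^2(\rd)$ satisfying the intertwining relation $\pi(\cA z)=T\pi(z)T^{-1}$ for all $z\in\rdd$ must coincide with $\mu(\cA)$ up to a unimodular constant. Denoting by $T(\cA)$ the operator defined by the right-hand side of \eqref{met int formula} (with $c$ provisionally set to $1$), it therefore suffices to establish two properties: (i) $T(\cA)$ is unitary on $L^2(\rd)$, and (ii) $T(\cA)$ intertwines $\pi(z)$ with $\pi(\cA z)$ for every $z\in\rdd$.

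For (i), I would first observe that the symplecticity of $\cA$ together with $\det B\ne 0$ forces both $DB^{-1}$ and $B^{-1}A$ to be symmetric, so that $\Phi_{\cA}$ is a genuine real quadratic form. Writing the phase explicitly, $T(\cA)$ factors as a chirp multiplication by $e^{i\pi x\cdot DB^{-1}x}$, followed by an ordinary Fourier transform in $y$ evaluated at $\eta = B^{-1}x$ (whose linear change of variables produces precisely the factor $|\det B|^{-1/2}$ needed for $L^{2}$-normalisation), followed by another chirp multiplication by $e^{i\pi y\cdot B^{-1}Ay}$. Each of these operations is unitary on $L^2(\rd)$, hence so is $T(\cA)$.

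Step (ii) is the main obstacle. I would check the intertwining relation separately for pure translations $z=(z_1,0)$ and pure modulations $z=(0,z_2)$. Substituting the integral kernel into $T(\cA)M_{z_2}T_{z_1}f(x)$ and performing the change of variable $y\mapsto y+z_1$, one expands
\[
\Phi_{\cA}(x,y+z_1)=\Phi_{\cA}(x,y)-z_1\cdot B^{-1}x+y\cdot B^{-1}Az_1+\tfrac{1}{2}z_1\cdot B^{-1}Az_1,
\]
using the symmetry of $B^{-1}A$. Combining the resulting linear-in-$y$ cross-terms with the modulation factor $e^{2\pi i z_2 y}$, and invoking the symplectic identity $A^{T}D-C^{T}B=I$ (which converts $B^{-1}x$-type contributions into the translation and modulation associated with $\cA(z_1,z_2)=(Az_1+Bz_2,\,Cz_1+Dz_2)$), one reassembles the expression as $\pi(\cA(z_1,z_2))T(\cA)f(x)$ up to a phase depending only on $(z_1,z_2)$. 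The delicate bookkeeping is precisely verifying that this excess phase is independent of the dynamical variables $x$ and $y$; once this is done, the uniqueness principle above yields $T(\cA)=c\,\mu(\cA)$ with $|c|=1$, which is \eqref{met int formula}.

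A more conceptual alternative avoids the direct algebraic verification: one decomposes any free symplectic matrix as
\[
\cA = V_{DB^{-1}}\,M_{B^{-1}}\,J\,V_{B^{-1}A},
\]
where $V_{P}$ and $M_{L}$ are the standard ``chirp'' and ``dilation'' generators of $\Spdr$, and invokes the closed-form expressions for the metaplectic images of these generators (chirp multiplication, normalised dilation, and the Fourier transform, all verified directly against the intertwining relation). Composing these elementary metaplectic operators and collecting phases reproduces the oscillatory integral \eqref{met int formula}, with the constant $c$ encoding the $\pm 1$ ambiguity inherent in the metaplectic cover $\rho^{\mathrm{Mp}}\colon\Mpdr\to\Spdr$.
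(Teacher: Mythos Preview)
The paper does not prove this statement at all: it is quoted verbatim from de Gosson's monograph (the citation in the theorem header), and the accompanying footnote even remarks that in that reference the quadratic Fourier transform \eqref{met int formula} is taken as the \emph{starting point} for constructing the metaplectic representation rather than as a consequence of it. So there is no ``paper's own proof'' to compare against.

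That said, your proposal is mathematically sound and corresponds to the standard derivations one finds elsewhere. Your first route (unitarity plus direct verification of the intertwining relation, then invoking Schur-type uniqueness) and your second route (factorising a free symplectic matrix as $V_{DB^{-1}}\,M_{B^{-1}}\,J\,V_{B^{-1}A}$ and composing the known metaplectic images of the generators) are both correct; the second is in fact closer in spirit to de Gosson's treatment, where the metaplectic group is built up from precisely these elementary quadratic Fourier transforms and their composition rules. The only caveat is that in the first route you should be explicit that the ``uniqueness principle'' you invoke is the irreducibility of the Schr\"odinger representation (so that any intertwiner is a scalar), which the paper assumes but does not state; and in the bookkeeping for the intertwining relation the phase you discard must be checked to be unimodular and $(x,y)$-independent, which is straightforward but should not be left implicit.
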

 Incidentally, notice that $\mu(J) = c \cF^{-1}$. 

It is important to recall that a truly distinctive property of the Weyl quantization is its \textit{symplectic covariance} {\cite[Thm.\ 215]{dG symp met}}, namely: for any $\cA \in \Spdr$ and $\sigma \in \cS'(\rdd)$, the following relation holds:
\begin{equation}\label{symp cov}
(\sigma \circ \cA)^{\mathrm{w}} = \mu(\cA)^{-1}\sigma^{\mathrm{w}} \mu(\cA).
\end{equation} 

Let now $a$ be a real-valued, time-independent, quadratic homogeneous polynomial on $\rdd$, namely: 
\[ a(x,\xi) = \frac{1}{2}xAx + \xi B x + \frac{1}{2}\xi C \xi, \] where $A,C\in \bR^{d\times d}$ are symmetric matrices and $B\in \bR^{d\times d}$. The phase-space flow determined by the Hamilton equations\footnote{The factor $2\pi$ derives from the normalization of the Fourier transform adopted in the paper.}
\[ 2 \pi \dot{z} = J \nabla_z a(z) = \cA, \quad \cA= \left(\begin{array}{cc} B & C \\ -A & -B^{\top}\end{array}\right) \in \mathfrak{sp}(d,\bR), \]
defines a mapping $\bR \ni t  \mapsto \cA_t = e^{(t/2\pi)\cA} \in \Spdr. $ It follows from the general theory of covering manifolds that this path can be lifted in a unique way to a mapping 
$ \bR \ni t \mapsto M_0(t) \in \Mpdr, \ M_0(0) = I$; hence $ \rho^{\mathrm{Mp}}(M_0(t)) = \cA_t$. Then, with $H_0 = a^{\mathrm{w}}$, the Schr\"odinger equation
\[\begin{cases}
i\partial_{t}\psi=H_0 \psi\\
\psi\left(0,x\right)=\varphi\left(x\right), \quad \varphi \in \cS(\rd),
\end{cases} \] is solved by  
\[ \psi(t,x)=e^{-it H_0}\varphi(x)=\mu(M_0(t))\varphi(x),
\]
see \cite[Sec.\ 15.1.3]{dG symp met}. By a slight abuse of language we will write $\mu(\cA_t)$ in place of $\mu(M_0(t))$. 
We recommend \cite{CN pot mod,dG symp met, folland} for further details on the matter.
\subsection{Operators and kernels}
Consider the space $\cL(X,Y)$ of all continuous linear mappings between two Hausdorff topological vector spaces $X$ and $Y$. It can be endowed with different topologies \cite{treves}, in which cases we write:
\begin{enumerate} 
	\item $\cL_b(X,Y)$, if equipped with the topology of \textit{bounded convergence}, that is uniform convergence on bounded subsets of $X$;
	\item $\cL_c(X,Y)$, if equipped with the topology of \textit{compact convergence}, that is uniform convergence on compact subsets of $X$;
	\item $\cL_s(X,Y)$, if equipped with the topology of \textit{pointwise convergence}, that is uniform convergence on finite subsets of $X$.
\end{enumerate}
Notice that if $Y=\bC$, $\cL_b(X,Y) = X_b'$ (the strong dual of $X$), while $\cL_s(X,Y)= X_s'$ (the weak dual of $X$). 
We will be mainly concerned with the case $X=\cS(\rd)$ and $Y=\cS'(\rd)$, the latter always endowed with the strong topology unless otherwise specified (i.e., $\cS'(\rd)=\cL_b(\cS(\rd),\bC)$). 
The celebrated Schwartz kernel theorem is usually invoked for proving that any reasonably well-behaved operator is indeed an integral transform, though in a distributional sense. In the following we will need this identification but at the topological level \cite{gask,treves}, that is, a linear map $A:\cS(\rd)\rightarrow \cS'(\rd)$ is continuous if and only if it is generated by a  (unique) temperate distribution $K\in \cS'(\rdd)$, namely:
	\begin{equation}
	\langle Af,g\rangle =\langle K,g\otimes\overline{f}\rangle ,\qquad\forall f,g\in\mathcal{S}(\mathbb{R}^{d}),\label{kernel eq}
	\end{equation}
and the correspondence $K\mapsto A$ above is a topological isomorphism between $\mathcal{S}'(\mathbb{R}^{2d})$ and the space $\mathcal{L}_b\left( \cS(\rd), \cS'(\rd)\right)$.
As mentioned above, $\cS'(\rd)$ and $\cS'(\rdd)$ are endowed with the strong topology.
\begin{proposition}\label{propokernel}
Let $A_n\to A$ in $\mathcal{L}_s\left( \cS(\rd), \cS'(\rd)\right)$. Then we have convergence in $\mathcal{S}'(\mathbb{R}^{2d})$ of the corresponding distribution kernels. 
\end{proposition}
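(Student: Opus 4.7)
The plan is to upgrade the hypothesis of pointwise convergence in $\cL_s(\cS(\rd),\cS'(\rd))$ to uniform convergence on bounded subsets, i.e.\ convergence in $\cL_b(\cS(\rd),\cS'(\rd))$, and then to invoke the \emph{topological} version of the Schwartz kernel theorem already recalled in the preceding discussion. Once $A_n\to A$ is established in $\cL_b$, that topological isomorphism between $\cS'(\rdd)$ and $\cL_b(\cS(\rd),\cS'(\rd))$ immediately delivers $K_n\to K$ in $\cS'(\rdd)$ with its strong topology.

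To carry out the upgrade I would first note that, for each fixed $f\in\cS(\rd)$, the sequence $\{A_nf\}$ is convergent and hence bounded in $\cS'(\rd)$. Since $\cS(\rd)$ is a Fr\'echet space, and in particular barreled, the Banach--Steinhaus theorem in its locally convex form applies and ensures that the pointwise bounded family $\{A_n\}\subset\cL(\cS(\rd),\cS'(\rd))$ is equicontinuous. A standard $\varepsilon/3$ argument (valid for any equicontinuous family which converges pointwise) then shows that $A_n\to A$ uniformly on every precompact subset of $\cS(\rd)$.

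To promote \emph{precompact} to \emph{bounded}, I would invoke the fact that $\cS(\rd)$ is a nuclear Fr\'echet space, and therefore a Montel space, so that its bounded subsets are relatively compact. Combining these ingredients gives $A_n\to A$ in $\cL_b(\cS(\rd),\cS'(\rd))$, and the topological isomorphism of the Schwartz kernel theorem then yields the desired strong convergence $K_n\to K$ in $\cS'(\rdd)$.

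The main obstacle here is purely topological bookkeeping: one must verify that Banach--Steinhaus is applicable when the codomain is only a locally convex space (which it is, since barreledness of the domain is all that the statement requires) and that the topological Schwartz kernel theorem genuinely matches \emph{strong} convergence in $\cS'(\rdd)$ with convergence in $\cL_b(\cS(\rd),\cS'(\rd))$. Both facts are available in the references \cite{gask,treves} already cited in the paper, so no new estimates are needed; the argument is essentially a reshuffling of standard functional-analytic tools.
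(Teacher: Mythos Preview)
Your proof is correct and follows essentially the same route as the paper: upgrade pointwise to compact convergence via Banach--Steinhaus/equicontinuity (the paper cites this step as \cite[Cor.\ at pag.\ 348]{treves}), then use the Montel property of $\cS(\rd)$ to pass to $\cL_b$, and conclude by the topological Schwartz kernel theorem. The only difference is that you unpack the Banach--Steinhaus step explicitly, whereas the paper invokes the packaged corollary in Tr\`eves.
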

\begin{proof}
Since $\cS(\rd)$ is a Fréchet space and $A_n$, being a sequence, defines a filter with countable basis on $\cL_s(\cS(\rd),\cS'(\rd))$, from \cite[Cor.\ at pag. 348]{treves} we have that $A_n\rightarrow A$ also in $\cL_c(\cS(\rd),\cS'(\rd))$, which is in turn equivalent to convergence in $\cL_b(\cS(\rd),\cS'(\rd))$ since $\cS(\rd)$ is a Montel space - cf.\ \cite[Prop.\ 34.4 and 34.5]{treves}. The desired conclusion then follows from Schwartz' kernel theorem.
\end{proof}

\section{Technical lemmas}
The following lemma extends \cite[Lem.\ 2.2 and Prop.\ 5.2]{CGNR jmp}.
\begin{lemma}\label{tech lem}
Let $X$ denote either $M^{\infty}_s(\rdd)$, $s\ge 0$, or $M^{\infty,1}(\rdd)$.  \begin{enumerate}[label=(\roman*)]
	
	\item Let $\sigma\in X$
	and $t\mapsto\mathcal{A}_{t}\in\mathrm{Sp}\left(d,\mathbb{R}\right)$
	be a continuous mapping defined on the compact interval $\left[-T,T\right]\subset\mathbb{R}$,
	$T>0$. For any $t\in\left[-T,T\right]$, we have $\mathcal{\sigma}\circ\mathcal{A}_{t}\in X$,
	with 
	\begin{equation}\label{estimate linear comp}
	\left\Vert \sigma\circ\mathcal{A}_{t}\right\Vert _{X}\le C\left(T\right)\|\sigma\|_X.
	\end{equation}
	\item Let $\sigma \in X$ and $A,B,C$ be real $d\times d$ matrices with $B$ invertible, and set $$\Phi(x,y)=\frac{1}{2}xAx + yBx + \frac{1}{2}yCy. $$ There exists a unique symbol $\widetilde{\sigma} \in X$ such that, for any $f\in \cS(\rd)$:
	\begin{equation}\label{swap weyl fio}
	\sigma^{\mathrm{w}} \int_{\rd} e^{2\pi i \Phi(x,y)}f(y)dy = \int_{\rd} e^{2\pi i \Phi(x,y)}\widetilde{\sigma}(x,y)f(y)dy.
	\end{equation}
	Furthermore, the map $\sigma \mapsto \widetilde{\sigma}$ is bounded on $X$.

	\end{enumerate}
\end{lemma}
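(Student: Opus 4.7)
For part (i), the plan is to exploit the STFT characterization of modulation spaces together with the standard behavior of the STFT under a linear change of variables. For $\mathcal{A}\in\mathrm{Sp}(d,\bR)\subset\mathrm{GL}(2d,\bR)$ (so $|\det\mathcal{A}|=1$) and a Schwartz window $g\in\cS(\rdd)$, a direct computation from \eqref{STFTdef} gives
\[|V_g(\sigma\circ\mathcal{A})(z,\zeta)|=|V_{g\circ\mathcal{A}^{-1}}\sigma(\mathcal{A}z,(\mathcal{A}^{\top})^{-1}\zeta)|.\]
Substituting into the mixed-norm integral defining $\|\sigma\circ\mathcal{A}\|_X$ and performing a further change of variables reduces matters to the window-independence of the modulation-space norms, at the cost of a constant depending continuously on $\mathcal{A}$ (through $\|\mathcal{A}^{-1}\|$ and a few $\cS$-seminorms of $g\circ\mathcal{A}^{-1}$). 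Continuity of $t\mapsto\mathcal{A}_t$ on the compact interval $[-T,T]$ then upgrades this to the uniform bound \eqref{estimate linear comp}.

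For part (ii), the crucial move is to recognize the integral operator $T_0f(x):=\int_{\rd}e^{2\pi i\Phi(x,y)}f(y)\,dy$ as a scalar multiple of a metaplectic operator. Inverting \eqref{met int formula}, we may write $T_0=c^{-1}|\det B|^{1/2}\,\mu(\mathcal{A}_0)$ for some free symplectic $\mathcal{A}_0\in\mathrm{Sp}(d,\bR)$ whose blocks are determined by $A,B,C$, and some $c\in\bC$ of modulus one. The symplectic covariance \eqref{symp cov} then yields
\[\sigma^{\mathrm{w}}T_0 = c^{-1}|\det B|^{1/2}\,\mu(\mathcal{A}_0)\,\tau^{\mathrm{w}},\qquad \tau:=\sigma\circ\mathcal{A}_0,\]
with $\tau\in X$ and $\|\tau\|_X\lesssim\|\sigma\|_X$ by part (i). It therefore suffices to show that the distributional kernel of $\mu(\mathcal{A}_0)\,\tau^{\mathrm{w}}$ factors as $e^{2\pi i\Phi(x,y)}\tilde\sigma(x,y)$ for a unique $\tilde\sigma\in X$, with $\tau\mapsto\tilde\sigma$ bounded; uniqueness is automatic because $e^{2\pi i\Phi}$ never vanishes and operators are identified with their Schwartz kernels.

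To extract $\tilde\sigma$ I would compose the two kernels using the Weyl expression $k_\tau(z,y)=\int e^{2\pi i(z-y)\eta}\tau(\tfrac{z+y}{2},\eta)\,d\eta$, factor out $e^{2\pi i\Phi(x,y)}$, and change variable $v=z-y$. Using the symmetry of $C$ in the identity $\Phi(x,y+v)-\Phi(x,y)=vBx+vCy+\tfrac12 vCv$, one obtains, up to constants,
\[\tilde\sigma(x,y) = \iint_{\rdd} e^{2\pi i[vBx+vCy+\frac{1}{2}vCv+v\eta]}\,\tau\!\left(y+\tfrac{v}{2},\eta\right)\,d\eta\,dv,\]
so $\tilde\sigma$ is built from $\tau$ by an affine change of variables in $\rdd$, a partial Fourier transform, and multiplication by a chirp of degree $\le 2$.

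The expected main obstacle is the last of these operations in the Sj\"ostrand setting $X=M^{\infty,1}$: a quadratic chirp does not itself lie in $M^{\infty,1}$, so Lemma \ref{Mpqs ban alg} does not apply as a pointwise multiplier. The clean way around this is to package the whole map $\tau\mapsto\tilde\sigma$ as a single metaplectic operator on $L^2(\rdd)$---whose associated symplectic matrix in $\mathrm{Sp}(2d,\bR)$ can be read off from $\mathcal{A}_0$ and from the Weyl-kernel representation---and then invoke the known boundedness of metaplectic operators on $M^{\infty}_s(\rdd)$ and $M^{\infty,1}(\rdd)$, which itself rests on the symplectic covariance and on part (i) applied in dimension $2d$. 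This is essentially the route taken in \cite[Lem.\ 2.2 and Prop.\ 5.2]{CGNR jmp}, of which the present lemma is the natural weighted/Sj\"ostrand extension.
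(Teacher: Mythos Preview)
Your argument for part (i) is correct and essentially identical to the paper's: the change-of-variable identity $|V_g(\sigma\circ\mathcal{A})(z,\zeta)|=|V_{g\circ\mathcal{A}^{-1}}\sigma(\mathcal{A}z,(\mathcal{A}^{-1})^\top\zeta)|$, the weight estimate $\langle\mathcal{A}^\top\zeta\rangle^s\le\|\mathcal{A}\|^s\langle\zeta\rangle^s$, the change-of-window lemma, and compactness of $[-T,T]$ for the uniformity in $t$.

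Part (ii) has a genuine gap at the final step. You propose to ``invoke the known boundedness of metaplectic operators on $M^{\infty}_s(\rdd)$ and $M^{\infty,1}(\rdd)$,'' claiming this rests on symplectic covariance and part (i) in dimension $2d$. That assertion is false in general: the intertwining property gives $|V_g(\mu(\mathcal{S})f)(z,\zeta)|=|V_{g'}f(\mathcal{S}^{-1}(z,\zeta))|$, and since a generic $\mathcal{S}\in\mathrm{Sp}(2d,\bR)$ mixes the $z$ and $\zeta$ components, neither the mixed $L^{\infty,1}$ structure nor the frequency-only weight $\langle\zeta\rangle^s$ survives. The Fourier transform on $\rdd$ is already a counterexample, as it sends $M^{\infty,1}(\rdd)$ to $W^{\infty,1}(\rdd)\neq M^{\infty,1}(\rdd)$. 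Part (i) only covers the block-diagonal matrices $\mathrm{diag}(\mathcal{A},(\mathcal{A}^{-1})^\top)$, i.e.\ pure linear changes of variable, which form a proper subgroup.

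The paper proceeds instead by writing down an explicit factorization $\widetilde{\sigma}=\cU_2\,\cU\,\cU_1\,\sigma$ directly from $\sigma$ (the detour through $\tau=\sigma\circ\mathcal{A}_0$ is not needed), with $\cU_1\sigma(x,y)=\sigma(x,y+Ax)$ and $\cU_2\sigma(x,y)=\sigma(x,B^\top y)$ handled by part (i), and $\cU$ the specific Fourier multiplier $\widehat{\cU\sigma}(\xi,\eta)=e^{\pi i\xi\eta}\widehat{\sigma}(\xi,\eta)$. Boundedness of \emph{this particular} $\cU$ on every $M^{p,q}_s(\rdd)$ is a separate fact, obtained by inspecting the proof of \cite[Cor.\ 14.5.5]{gro1 book}. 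Your outline can be salvaged, but only by computing the actual symplectic matrix of the map $\tau\mapsto\tilde\sigma$ and verifying it has the special block structure required; the blanket appeal to ``metaplectic $\Rightarrow$ bounded on $X$'' does not go through.
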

\begin{proof} The case $X=M^{\infty,1}(\rdd)$ is covered by \cite[Lem.\ 2.2]{CGNR jmp}. We prove here the claim for $X=M^{\infty}_s(\rdd)$.\par
$(i)$	For any non-zero
	window function $\Phi\in\mathcal{S}\left(\mathbb{R}^{2d}\right)$ and $\mathcal{A}\in\mathrm{Sp}\left(d,\mathbb{R}\right)$
	we have
	\begin{align*}
	\left\Vert \sigma\circ\mathcal{A}\right\Vert _{M_{s}^{\infty}} & =\sup_{z,\zeta\in\mathbb{R}^{2d}}\left|\left\langle \sigma\circ\mathcal{A},M_{\zeta}T_{z}\Phi\right\rangle \right|\left\langle \zeta\right\rangle ^{s}\\
	& =\sup_{z,\zeta\in\mathbb{R}^{2d}}\left|\left\langle \sigma,M_{\left(\mathcal{A}^{-1}\right)^{\top}\zeta}T_{\mathcal{A}z}\left(\Phi\circ\mathcal{A}^{-1}\right)\right\rangle \right|\left\langle \zeta\right\rangle ^{s}\\
	& =\sup_{z,\zeta\in\mathbb{R}^{2d}}\left|\left\langle \sigma,M_{\zeta}T_{z}\left(\Phi\circ\mathcal{A}^{-1}\right)\right\rangle \right|\left\langle \mathcal{A}^{\top}\zeta\right\rangle ^{s}\\
	& \leq\left\Vert \mathcal{A}^{\top}\right\Vert ^{s}\left\Vert V_{\Phi\circ\mathcal{A}^{-1}}\sigma\right\Vert _{M_{s}^{\infty}}\\
	& \lesssim \left\Vert \mathcal{A}\right\Vert ^{s}\left\Vert V_{\Phi\circ\mathcal{A}^{-1}}\Phi\right\Vert _{L_{s}^{1}}\left\Vert \sigma\right\Vert _{M_{s}^{\infty}},
	\end{align*}
	where we used the estimate $\left\langle \mathcal{A}^{\top}\zeta\right\rangle ^{s}\le\left\Vert \mathcal{A}^{\top}\right\Vert ^{s}\left\langle \zeta\right\rangle ^{s}$
	(here $\left\Vert \mathcal{B}\right\Vert $ denotes the operator norm
	of the matrix $\mathcal{B}$) and the change-of-window Lemma \cite[Lem.\ 11.3.3]{gro1 book} ($\|\cdot\|_{L_{s}^{1}}$ denoting the weighted $L^1$ norm with weight $\langle\zeta\rangle^{s}$) . 
		
	We now prove the uniformity with respect to the parameter $t$, when $\mathcal{A}=\mathcal{A}_{t}$. The
	subset $\left\{ \mathcal{A}_{t}\,:\,t\in\left[-T,T\right]\right\} \subset\mathrm{Sp}\left(d,\mathbb{R}\right)$
	is bounded and thus $\left\Vert \mathcal{A}_{t}\right\Vert \le C_{1}\left(T\right)$.
	Furthermore, $\left\{ V_{\Phi\circ\mathcal{A}_{t}^{-1}}\Phi\,:\,t\in\left[-T,T\right]\right\} $
	is a bounded subset of $\mathcal{S}(\mathbb{R}^{2d})$
	(this follows at once by inspecting the Schwartz seminorms of
	$\Phi\circ\mathcal{A}_{t}^{-1}$), hence $\left\Vert V_{\Phi\circ\mathcal{A}^{-1}}\Phi\right\Vert _{L_{s}^{1}}\le C_{2}\left(T\right)$. 
	
	$(ii)$ The proof is similar to that of the case $X=M^{\infty,1}(\rdd)$ in \cite[Prop.\ 5.2]{CGNR jmp}. In particular, $\widetilde{\sigma}$ is explicitly derived from $\sigma$ as follows: $\widetilde{\sigma} = \cU_2 \cU \cU_1 \sigma$, where $\cU,\cU_1,\cU_2$ are the mappings
	\[\cU_1\sigma (x,y)=\sigma(x,y+Ax), \quad \cU_2\sigma(x,y)=\sigma(x,B^{\top}y),\quad \widehat{\cU\sigma}(\xi,\eta)=e^{\pi i \xi \eta}\widehat{\sigma}(\xi,\eta).\] $\cU_1$ and $\cU_2$ are isomorphisms of $M^{\infty}_s(\rdd)$, as a consequence of the previous item. For what concerns $\cU$, an inspection of the proof of \cite[Cor.\ 14.5.5]{gro1 book} shows that any modulation space $M^{p,q}_s(\rdd)$ is invariant under the action of $\cU$. 
\end{proof}

We will also make use of the following easy result. 

\begin{lemma}
	\label{exp V0} Let $A$ be a Banach algebra of complex-valued functions on $\rd$ with respect to pointwise multiplication and assume $u \in A$. For any real $t$
	and integer $n\ge1$ we have
	\[
	e^{-i\frac{t}{n}u}=1+i\frac{t}{n}u_{0},
	\]
	where $u_{0}\in A$ and the following
	estimate holds: 
	\[
	\left\Vert u_{0}\right\Vert \leq\left\Vert u\right\Vert e^{|t|\left\Vert u\right\Vert }.
	\]
\end{lemma}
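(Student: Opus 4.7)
The strategy is to interpret $e^{-i(t/n)u}$ as an element of the Banach algebra $A$ via the holomorphic functional calculus, and then to identify $u_0$ through a Duhamel-type representation from which the norm bound falls out essentially for free.

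First, I would define $e^{-i(t/n)u}$ as the sum of the power series $\sum_{k\ge 0} (-i(t/n))^k u^k / k!$, where the products $u^k$ are formed in $A$. The Banach algebra inequality $\|u^k\|\le \|u\|^k$ ensures absolute convergence in $A$, with the preliminary bound
\[
\bigl\|e^{-i(t/n)u}\bigr\| \le e^{(|t|/n)\|u\|}.
\]
Since $A$ is by hypothesis a Banach algebra of complex-valued functions on $\rd$ under pointwise multiplication, the element so defined coincides pointwise with the usual exponential $x\mapsto e^{-i(t/n)u(x)}$, so no ambiguity arises.

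Next, viewing $s\mapsto e^{-is(t/n)u}$ as a smooth $A$-valued curve (differentiable term-by-term in the series), the fundamental theorem of calculus yields, as an identity in $A$,
\[
e^{-i(t/n)u} - 1 \;=\; \int_0^1 \frac{d}{ds}\bigl[e^{-is(t/n)u}\bigr]\,ds \;=\; -i\frac{t}{n}\, u \int_0^1 e^{-is(t/n)u}\,ds,
\]
where the integral is understood in the Bochner sense and where commutativity of $A$ has been used implicitly. Setting
\[
u_0 \,:=\, -u \int_0^1 e^{-is(t/n)u}\,ds \;\in\; A,
\]
this rearranges at once to the claimed identity $e^{-i(t/n)u} = 1 + i(t/n)\,u_0$.

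Finally, the norm bound follows immediately by combining the Banach algebra property with the preliminary estimate from the first step:
\[
\|u_0\| \le \|u\| \int_0^1 \bigl\|e^{-is(t/n)u}\bigr\|\, ds \le \|u\| \int_0^1 e^{s(|t|/n)\|u\|}\, ds \le \|u\|\, e^{(|t|/n)\|u\|} \le \|u\|\, e^{|t|\|u\|},
\]
the last inequality using $1/n \le 1$. I do not anticipate any genuine obstacle; the only mildly delicate point is verifying that the functional calculus in $A$ is compatible with pointwise operations, but this is automatic since $A$ is a commutative function algebra, so $u^k$ in the algebraic sense and $u(\cdot)^k$ in the pointwise sense agree.
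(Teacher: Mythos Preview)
Your argument is correct and amounts to the same approach as the paper: the paper factors the exponential series directly, writing $u_0=-u\sum_{k\ge 0}(-it/n)^k u^k/(k+1)!$ and bounding the resulting series by $\|u\|e^{|t|\|u\|}$, while your Bochner integral $-u\int_0^1 e^{-is(t/n)u}\,ds$ is exactly the same element (term-by-term integration of the series recovers the paper's formula). The bounds you obtain, including the final use of $1/n\le 1$, mirror the paper's estimates line for line.
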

\begin{proof}
	We have 
	\[
	e^{-i\frac{t}{n}u}=\sum_{k=0}^{\infty}\left(-i\frac{t}{n}\right)^{k}\frac{u^{k}}{k!}=1+i\frac{t}{n}u_{0}
	\]
	with 
	\[
	u_0=- u \sum_{k=0}^{\infty}\left(-i\frac{t}{n}\right)^{k}\frac{u^{k}}{\left(k+1\right)!}.
	\]
	We can estimate the norm of $u_{0}$ as follows: 
	\begin{align*}
	\left\Vert u_{0}\right\Vert
	& \leq\left\Vert u\right\Vert \left(\sum_{k=0}^{\infty}\frac{|t|^{k}\left\Vert u\right\Vert^k}{\left(k+1\right)!}\right)\\
	& =\frac{1}{|t|}\left(e^{|t|\left\Vert u\right\Vert}-1\right)\leq\left\Vert u\right\Vert e^{|t|\left\Vert u\right\Vert}.
	\end{align*}
\end{proof}
Thanks to the subsequent result, we are able to treat Theorem \ref{maint sjo} as
a perturbation of Theorem \ref{maint minfty}. 
\begin{lemma}
	\label{sjo decomp}For any $\epsilon>0$ and $f\in M^{\infty,1}(\mathbb{R}^{d})$,
	there exist $f_{1}\in C^{\infty}_b(\mathbb{R}^{d})$ and
	$f_{2}\in M^{\infty,1}(\mathbb{R}^{d})$ such that 
	\[
	f=f_{1}+f_{2},\qquad\left\Vert f_{2}\right\Vert _{M^{\infty,1}}\le\epsilon.
	\]
\end{lemma}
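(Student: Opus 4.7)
The plan is to use the STFT inversion formula \eqref{inversion STFT} to split $f$ according to a low-frequency/high-frequency cutoff in the STFT frequency variable. Fix a nonzero window $g \in \cS(\rd)$ with $\|g\|_{L^2}=1$ and a smooth cutoff $\chi \in C_c^\infty(\rd)$ satisfying $\chi(\xi)=1$ for $|\xi|\le 1$ and $\operatorname{supp}\chi\subset\{|\xi|\le 2\}$. For $R>0$ write $\chi_R(\xi):=\chi(\xi/R)$ and set
\[
f_1 := V_g^*\bigl(\chi_R\cdot V_g f\bigr),\qquad f_2 := V_g^*\bigl((1-\chi_R)\cdot V_g f\bigr),
\]
so that $f=f_1+f_2$ by \eqref{inversion STFT}. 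The claim will follow once I check that $f_2$ can be made arbitrarily small in $M^{\infty,1}$ by taking $R$ large and that the low-frequency synthesis $f_1$ automatically lies in $C^\infty_b(\rd)$.

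For the smallness of $f_2$, I would invoke the continuity $V_g^*\colon L^{\infty,1}(\rdd)\to M^{\infty,1}(\rd)$ recalled in Section \ref{sec mod sp}, which yields
\[
\|f_2\|_{M^{\infty,1}}\,\lesssim\,\int_{\rd}(1-\chi_R(\xi))\sup_{x\in\rd}|V_g f(x,\xi)|\,d\xi.
\]
Since $\int_{\rd}\sup_{x}|V_g f(x,\xi)|\,d\xi=\|f\|_{M^{\infty,1}}<\infty$, dominated convergence forces the right-hand side to vanish as $R\to\infty$; picking $R=R(\epsilon)$ large enough therefore guarantees $\|f_2\|_{M^{\infty,1}}\le\epsilon$.

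For the smoothness of $f_1$, I would use the characterization $C^\infty_b(\rd)=\bigcap_{s\ge 0}M_s^\infty(\rd)$ noted after Lemma \ref{Mpqs ban alg}. Since $f\in M^{\infty,1}\hookrightarrow L^\infty$ and $g\in\cS$, the STFT $V_g f$ is bounded on $\rdd$ with $|V_g f(x,\xi)|\le\|f\|_{L^\infty}\|g\|_{L^1}$; consequently $F:=\chi_R\cdot V_g f$ is bounded and supported in $\rd\times\{|\xi|\le 2R\}$, so $\|F\|_{L_s^{\infty,\infty}}\le \langle 2R\rangle^s\|V_g f\|_{L^\infty}<\infty$ for every $s\ge 0$. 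Invoking the continuity $V_g^*\colon L_s^{\infty,\infty}(\rdd)\to M_s^\infty(\rd)$ then places $f_1$ in every $M_s^\infty$, hence in $C^\infty_b(\rd)$.

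The only technical point, and the one that might look delicate at first sight, is the boundedness of the synthesis map $V_g^*$ into the correct modulation space in both the weighted-$L^\infty$ and the $L^{\infty,1}$ incarnations. These are standard reproducing-kernel/Schur-type estimates for the STFT, already recalled in Section \ref{sec mod sp}, so no genuine obstacle is expected beyond careful bookkeeping of norms.
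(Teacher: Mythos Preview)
Your proof is correct and follows essentially the same strategy as the paper: split $f$ via the STFT inversion formula with a frequency cutoff and use the boundedness of $V_g^\ast$ on $L^{\infty,1}$ for the high-frequency remainder $f_2$. The only cosmetic differences are that the paper employs the sharp cutoff $1_{\{|\xi|\le R\}}$ and verifies $f_1\in C^\infty_b$ by direct differentiation under the integral sign, whereas you use a smooth cutoff and the characterization $C^\infty_b=\bigcap_{s\ge 0}M^\infty_s$ together with $V_g^\ast\colon L^{\infty}_s\to M^\infty_s$; both routes are equally valid.
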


\begin{proof}
	Fix $g\in\mathcal{S}(\mathbb{R}^{d})\backslash\left\{ 0\right\} $ with $\| g \|_{L^2}=1$, and set 
	\begin{equation} \label{f_1 decomp}
	f_{1} (y)= V_{g}^{*}\left(V_{g}f\cdot1_{A_{R}}\right)(y)= \int_{A_{R}}V_{g}f\left(x,\xi\right)e^{2\pi iy\xi}g\left(y-x\right)dx d\xi, \end{equation} 
	in the sense of distributions, where $1_{A_{R}}$ denotes the characteristic function of the set
	$A_{R}=\left\{ \left(x,\xi\right)\in\mathbb{R}^{2d}:\left|\xi\right|\le R\right\}$,
	and $R>0$ will be chosen later, depending on $\epsilon$.\par
	 The integral in \eqref{f_1 decomp} actually converges for every $y$ and defines a bounded function. Indeed, setting 
 	\[
	S\left(\xi\right)=\sup_{x\in\mathbb{R}^{d}}\left|V_{g}f\left(x,\xi\right)\right|	\] 	we have  $S\in L^{1}(\mathbb{R}^{d})$ by the assumption $f \in M^{\infty,1}(\rd)$, and for any $y \in \rd$,
\begin{align*}
	\left|f_{1}\left(y\right)\right| & =\left|\int_{A_{R}}V_{g}f\left(x,\xi\right)e^{2\pi iy\xi}g\left(y-x\right)dxd\xi \right|\\
	& \le\int_{A_{R}}\left|V_{g}f\left(x,\xi\right)\right|\left|g\left(y-x\right)\right| dxd\xi \\
	& \le\left(\int_{\rd}\left| g\left(y-x\right)\right|dx\right)\left(\int_{|\xi|\leq R}S\left(\xi\right)d\xi\right)\leq \|g\|_{L^1}\|S\|_{L^1}.
	\end{align*}
Similarly one shows that all the derivatives $\partial^\alpha f_1$ are bounded, using that $\xi^\alpha S(\xi)$ is integrable on $|\xi|\leq R$. Differentiation under the integral sign is permitted because for $y$ in a neighbourhood of any fixed $y_0\in\rd$ and every $N$,
\[
|V_{g}f\left(x,\xi\right)\partial_{y}^{\alpha}[e^{2\pi iy\xi}g\left(y-x\right)]|\leq C_N(1+|\xi|)^{|\alpha|}S(\xi)(1+|y_0-x|)^{-N},
\]
which is integrable in $A_R$. Hence $f_{1}\in C^{\infty}_b(\mathbb{R}^{d})$.\par

	Now, let
	\[
	f_{2}=f-f_{1}=V_{g}^{*}\left(V_{g}f\cdot1_{A_{R}^{c}}\right)
	\]
	where in the second equality we used the inversion formula for the STFT \eqref{inversion STFT}. The continuity of $V_g^* : L^{\infty,1}(\rdd) \to M^{\infty,1}(\rd)$ yields 
	
	\begin{align*}
	\left\Vert f_{2}\right\Vert _{M^{\infty,1}} & =\left\Vert V_{g}^{*}\left(V_{g}f\cdot1_{A_{R}^{c}}\right)\right\Vert _{M^{\infty,1}}\\
	& \lesssim\left\Vert V_{g}f\cdot1_{A_{R}^{c}}\right\Vert _{L^{\infty,1}}\\
	& = \int_{|\xi|> R} S(\xi) d\xi \le \epsilon
	\end{align*}
	provided that $R=R_{\epsilon}$ is large enough.
\end{proof}
As already claimed in the Introduction, we prove that the Sj\"ostrand class includes the Fourier transforms of (finite) complex measures. 
\begin{proposition}
	\label{fou meas sjo} Let $\cM (\rd)$ denote the space of complex Radon measures on $\rd$. The image of $\cM(\rd)$ under the Fourier transform is contained in $M^{\infty,1}(\rd)$, that is: $\cF\cM(\rd) \subset M^{\infty,1}(\rd). $
\end{proposition}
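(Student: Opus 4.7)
The plan is to verify the defining estimate for $M^{\infty,1}(\mathbb{R}^d)$ by a direct computation of the STFT of $f = \mathcal{F}\mu$, $\mu \in \mathcal{M}(\mathbb{R}^d)$, using any non-zero Schwartz window $g$. The key idea is that, thanks to the smoothing effect of $g$, the $\sup_x$-envelope of $V_gf(x,\xi)$ will collapse to a convolution-type expression involving $|\widehat{g}|$ and $|\mu|$, which integrates trivially in $\xi$.

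First I would unfold the two definitions. Writing $f(y) = \int e^{-2\pi i y \eta}\, d\mu(\eta)$ and substituting into \eqref{STFTdef},
\[
V_g f(x,\xi) = \int_{\mathbb{R}^d} \!\Bigl(\int_{\mathbb{R}^d} e^{-2\pi i y \eta}\, d\mu(\eta)\Bigr) e^{-2\pi i \xi y}\,\overline{g(y-x)}\,dy.
\]
The joint integrand is bounded in absolute value by $|g(y-x)|$, which is integrable against $d|\mu|(\eta)\otimes dy$ because $g\in\mathcal{S}(\mathbb{R}^d)\subset L^1(\mathbb{R}^d)$ and $|\mu|(\mathbb{R}^d)<\infty$. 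Fubini's theorem then allows to swap the order of integration; after the change of variables $u=y-x$ in the inner integral, one obtains
\[
V_g f(x,\xi) = \int_{\mathbb{R}^d} e^{-2\pi i x(\xi+\eta)}\,\overline{\widehat{g}(-(\xi+\eta))}\, d\mu(\eta).
\]
The entire $x$-dependence is now confined to a unimodular phase.

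The second step is immediate: since the exponential has modulus one,
\[
\sup_{x\in\mathbb{R}^d}|V_g f(x,\xi)| \le \int_{\mathbb{R}^d} |\widehat{g}(-(\xi+\eta))|\, d|\mu|(\eta).
\]
Integrating in $\xi$ and invoking Tonelli (the integrand is non-negative), followed by the translation invariance of Lebesgue measure, yields
\[
\|f\|_{M^{\infty,1}} \le \int_{\mathbb{R}^d} d|\mu|(\eta) \int_{\mathbb{R}^d} |\widehat{g}(-(\xi+\eta))|\,d\xi = |\mu|(\mathbb{R}^d)\,\|\widehat{g}\|_{L^1} < \infty,
\]
since $\widehat{g}\in\mathcal{S}(\mathbb{R}^d)\subset L^1(\mathbb{R}^d)$.

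There is no real obstacle in this argument; it is essentially one explicit computation followed by two applications of Fubini--Tonelli. The only point requiring care is the bookkeeping of signs and conjugates in the STFT convention of \eqref{STFTdef}, which I would double-check before committing to the final display.
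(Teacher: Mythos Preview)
Your proof is correct and follows essentially the same strategy as the paper: a direct STFT computation followed by Fubini--Tonelli, yielding the bound $|\mu|(\mathbb{R}^d)$ times an $L^1$-norm of the window. The only cosmetic difference is that the paper first passes to the Wiener amalgam side, proving the equivalent inclusion $\mathcal{M}(\mathbb{R}^d)\subset W^{\infty,1}(\mathbb{R}^d)$ and estimating $V_g\mu$ directly (which avoids computing the explicit formula for $V_g(\mathcal{F}\mu)$ and produces $\|g\|_{L^1}$ rather than $\|\widehat g\|_{L^1}$ in the final bound); your direct route is equally valid and arguably more self-contained since it does not invoke the $W^{p,q}$ spaces.
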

\begin{proof} 
	We regard $\mathcal{M}(\mathbb{R}^{d})\subset\mathcal{S}'(\mathbb{R}^{d})$. Therefore, for any non-zero window $g\in\mathcal{S}(\mathbb{R}^{d})$
	we can explicitly write the STFT of $\mu$:
	\[
	V_{g}\mu\left(x,\xi\right)=\left\langle \mu,M_{\xi}T_{x}g\right\rangle =\int_{\mathbb{R}^{d}}e^{-2\pi ix\xi}\overline{g\left(y-x\right)}d\mu(y).
	\]
	In view of the relation between the Wiener amalgam space $W^{p,q}(\rd)$ and $M^{p,q}(\rd)$, the claimed result is equivalent to prove that $\cM(\rd) \subset W^{\infty,1}(\rd)$. Indeed,
	\begin{flalign*}
	\left\Vert \mu\right\Vert _{W^{\infty,1}} & =\int_{\mathbb{R}^{d}} \sup_{\xi\in\mathbb{R}^{d}}\left|V_{g}\mu\left(x,\xi\right)\right| dx\\
	& \le\int_{\mathbb{R}^{d}} \sup_{\xi\in\mathbb{R}^{d}}\int_{\mathbb{R}^{d}}\big|e^{-2\pi ix\xi}\overline{g\left(y-x\right)}\big|d |\mu| \left(y\right) dx\\
	& =\int_{\mathbb{R}^{d}}\int_{\mathbb{R}^{d}}\big|{g\left(y-x\right)}\big|d|\mu|\left(y\right)dx\\
	& =\int_{\mathbb{R}^{d}}\int_{\mathbb{R}^{d}}\big|{g\left(y-x\right)}\big|dxd|\mu|\left(y\right)\\
	& =\left\Vert g\right\Vert _{L^{1}}|\mu|(\mathbb{R}^{d}).
	\end{flalign*}
\end{proof}

\section{Proof of Theorem \ref{maint minfty} and Corollary \ref{maint s000}}
\subsection{Proof of Theorem \ref{maint minfty}}
Recall that $H_{0}=a^{\text{w}}$ is the Weyl quantization of the real
quadratic form $a\left(x,\xi\right)$ on $\mathbb{R}^{2d}$ and we are assuming
$V\in M^{\infty}_s (\rd)$, with $s>2d$ (the multiplication by $V$ coincides with $\sigma_V^{\mathrm{w}}$, as discussed in Remark \ref{mult symbol}). The proof will be carried on for $t>0$, since the case $t<0$ is similar. Actually, the upper-right block of the matrix $\cA_{-t} = \cA_t^{-1}$ is $-B_t^{\top}$ (cf.\ \cite[Eq.\ (2.6)]{dG symp met}), hence $\det B_t \ne 0$ if and only if  $\det B_{-t} \ne 0$. 

 Having in mind the framework outlined in the introductory Section \ref{sec maint}, we start from Trotter formula \eqref{trotter maint}. We employ Lemma \ref{exp V0} and the notation $e^{-itH_0}=\mu(\mathcal{A}_t)$ from Section \ref{mataop} in order to write
\[
E_{n}\left(t\right)=\left(e^{-i\frac{t}{n}H_0}e^{-i\frac{t}{n}V}\right)^{n}=\left(\mu\left(\mathcal{A}_{t/n}\right)\left(1+i\frac{t}{n}V_{0}\right)\right)^{n}
\]
for a suitable $V_0=V_{0,n,t}$. According to Remark \ref{mult symbol}, we identify $1+i\frac{t}{n}V_{0}$
with the Weyl operator with symbol $1+i\frac{t}{n}\sigma_{V_0}$, where $\sigma_{V_{0}} = V_{0}\otimes1$. By the assumption $V\in M^\infty_s(\rd)$, Remark \ref{mult symbol} and Lemma \ref{exp V0} we have
\begin{equation}\label{eqv0}
\|\sigma_{V_0}\|_{M_{s}^{\infty}}\leq C(t)
\end{equation}
for some constant $C(t)>0$ independent of $n$. 
By applying \eqref{symp cov} repeatedly, the ordered product of operators in $E_n(t)$ can be expanded as
\[ 
E_n(t) = \left[ \prod_{k=1}^{n} \left( I+i\frac{t}{n} \left(\sigma_{V_{0}} \circ \cA_{-k\frac{t}{n}} \right)^{\mathrm{w}} \right) \right] \mu\left(\cA_{t/n}\right)^n = a_{n,t}^{\text{w}}\,\mu(\mathcal{A}_{t}),
\]
where, for any $t$ and $n\ge1$:
\begin{align*}
\left\Vert a_{n,t}\right\Vert _{M_{s}^{\infty}} & = \bigg\Vert \prod_{k=1}^{n} \left(1+i\frac{t}{n} \left(\sigma_{V_{0}} \circ \cA_{-k\frac{t}{n}} \right) \right)\bigg\Vert_{M_{s}^{\infty}} \\
& \leq \prod_{k=1}^{n} \left( 1+\frac{t}{n} \big\Vert \sigma_{V_{0}} \circ \cA_{-k\frac{t}{n}} \big\Vert_{M^{\infty}_s} \right),
\end{align*}
where in the first product symbol we mean the twisted product $\#$ of symbols - cf.\ Section \ref{weyl sec}. \par
By Lemma \ref{tech lem} applied with $T=t$ and \eqref{eqv0},
we then have 
\begin{equation}\label{aennet}
\left\Vert a_{n,t}\right\Vert _{M_{s}^{\infty}}\le\left(1+\frac{t}{n}C\left(t\right)\right)^{n}\le e^{C\left(t\right)t},
\end{equation}
for some new locally bounded constant $C(t)>0$ independent of $n$. 

Since $\mathcal{A}_{t}$ is a free symplectic matrix by assumption,
by \eqref{met int formula} and  \eqref{swap weyl fio} we explicitly have 
\begin{alignat*}{1}
E_{n}\left(t\right)\psi(x) & =a_{n,t}^{\text{w}}\,\mu\left(\mathcal{A}_{t}\right)\psi(x)\\
& =c(t)\left| \det B_{t} \right|^{-1/2}\int_{\mathbb{R}^{d}}e^{2\pi i\Phi_{t}\left(x,y\right)}\widetilde{a_{n,t}}\left(x,y\right)\psi\left(y\right)dy,
\end{alignat*}
where $\Phi_t$ is given in \eqref{phit} and $c(t)$ is a suitable complex factor of modulus $1$.\par 
 Therefore,
we managed to write $E_{n}\left(t\right)$ as an integral operator
with kernel 
\[
e_{n,t}\left(x,y\right)=c(t)\left|\det B_{t}\right|^{-1/2}e^{2\pi i\Phi_{t}\left(x,y\right)}\widetilde{a_{n,t}}(x,y),
\]
Now, consider the integral kernel $u_t$
of the propagator $U\left(t\right)=e^{-it\left(H_{0}+V\right)}$ and define for consistency $\widetilde{a_{t}}\in\mathcal{S}'(\mathbb{R}^{2d})$
such that 
\[
u_{t}\left(x,y\right)= c(t)\left| \det B_{t} \right|^{-1/2}e^{2\pi i\Phi_{t}\left(x,y\right)}\widetilde{a_{t}}(x,y).
\]
Since we know by the usual Trotter formula \eqref{trotter maint} that for any fixed $t$ 
\[
\| E_{n}(t)\psi-U(t)\psi\|_{L^2}\to 0 ,\qquad\forall\psi\in L^2(\mathbb{R}^{d}),
\]
we have $E_n(t)\rightarrow U(t)$ in $\cL_s(\cS(\rd),\cS'(\rd))$, because $\cS(\rd)\hookrightarrow L^2(\rd) \hookrightarrow \cS'(\rd)$. As a consequence of Proposition \ref{propokernel}, we get $e_{n,t}\rightarrow u_t$ in $\cS'(\rd)$. This is equivalent to 
\begin{equation}\label{16bis}
\widetilde{a_{n,t}}\rightarrow \widetilde{a_{t}}\quad{\rm in}\ \mathcal{S}'(\mathbb{R}^{2d}).
\end{equation}
Therefore, for any
non-zero $\Psi\in\mathcal{S}(\mathbb{R}^{2d})$ we have pointwise
convergence of the corresponding short-time Fourier transforms: for any
fixed $\left(z,\zeta\right)\in\mathbb{R}^{4d}$,
\begin{equation}\label{stft point conv}
V_{\Psi}\widetilde{a_{n,t}}\left(z,\zeta\right)=\left\langle \widetilde{a_{n,t}},M_{\zeta}T_{z}\Psi\right\rangle \rightarrow\left\langle \widetilde{a_{t}},M_{\zeta}T_{z}\Psi\right\rangle =V_{\Psi}\widetilde{a_{t}}(z,\zeta).
\end{equation}
By  \eqref{aennet} and Lemma \ref{tech lem} we see that the sequence $\widetilde{a_{n,t}}$, for any fixed $t$, is bounded in $M_{s}^{\infty}(\rdd)$. Hence,
there exists a constant $C=C(t)$ independent of $n$ such that
\begin{equation}\label{17eq}
\left|V_{\Psi}\widetilde{a_{n,t}}\left(z,\zeta\right)\right|\le C\left\langle \zeta\right\rangle ^{-s},\qquad\forall z,\zeta\in\mathbb{R}^{2d}.
\end{equation}
Combining this estimate with \eqref{stft point conv} immediately yields $\widetilde{a_{t}} \in M^{\infty}_s(\rdd)$ as well, hence the first claim of Theorem \ref{maint minfty}. 

For the remaining part, we argue as follows: choose a non-zero window $\Psi\in C_{c}^{\infty}(\mathbb{R}^{2d})$
and set $\Theta\in C_{c}^{\infty}(\mathbb{R}^{2d})$ with
$\Theta=1$ on $\mathrm{supp}\Psi$; for any fixed $z\in\mathbb{R}^{2d}$
and $0<r<s-2d$, we have
\begin{align*}
\left\Vert  \mathcal{F}\left[\left(e_{n,t}-u_{t}\right)\overline{T_{z}\Psi}\right]\right\Vert _{L_{r}^{1}} & =\left| \det B_{t} \right|^{-1/2}\left\Vert \mathcal{F}\left[e^{2\pi i\Phi_{t}}\left(\widetilde{a_{n,t}}-\widetilde{a_{t}}\right)\overline{T_{z}\Psi}\right]\right\Vert _{L_{r}^{1}}\\
& =\left| \det B_{t} \right|^{-1/2}\left\Vert \mathcal{F}\left[\left(T_{z}\Theta e^{2\pi i\Phi_{t}}\right)\left(\widetilde{a_{n,t}}-\widetilde{a_{t}}\right)\overline{T_{z}\Psi}\right]\right\Vert _{L_{r}^{1}}\\
& =\left| \det B_{t} \right|^{-1/2}\left\Vert \mathcal{F}\left[T_{z}\Theta e^{2\pi i\Phi_{t}}\right]*\mathcal{F}\left[\left(\widetilde{a_{n,t}}-\widetilde{a_{t}}\right)\overline{T_{z}\Psi}\right]\right\Vert _{L_{r}^{1}}\\
& \lesssim\left| \det B_{t} \right|^{-1/2}\left\Vert \mathcal{F}\left[T_{z}\Theta e^{2\pi i\Phi_{t}}\right]\right\Vert _{L_{r}^{1}}\left\Vert \mathcal{F}\left[\left(\widetilde{a_{n,t}}-\widetilde{a_{t}}\right)\overline{T_{z}\Psi}\right]\right\Vert _{L_{r}^{1}},
\end{align*}
where  the convolution inequality in the last step is an easy consequence of \textit{Peetre's inequality}\footnote{Namely, $ \la x-y \ra^r \le C_r \la x \ra^r \la y \ra^r$, for $ x,y\in \rd, r\geq 0$.}.

Clearly, $T_{z}\Theta e^{2\pi i\Psi_{t}}\in C_{c}^{\infty}(\mathbb{R}^{2d})$,
while \[ \left\Vert \mathcal{F}\left[\left(\widetilde{a_{n,t}}-\widetilde{a_{t}}\right)\overline{T_{z}\Psi}\right]\right\Vert _{L_{r}^{1}}\rightarrow0 \]
by dominated convergence, using \eqref{stft point conv} and
\begin{align*}
|\mathcal{F}\left[\left(\widetilde{a_{n,t}}-\widetilde{a_{t}}\right)\overline{T_{z}\Psi}\right]| \langle \zeta\rangle ^{r}
& =\left|V_{\Psi}\left(\widetilde{a_{n,t}}-\widetilde{a_{t}}\right)\left(z,\zeta\right)\right|\left\langle \zeta\right\rangle ^{r} \\
&  \leq C\left\langle \zeta\right\rangle ^{r-s}\in L^1(\rdd),
\end{align*}
because $s-r>2d$, where in the last inequality we used \eqref{17eq} and the fact that $\widetilde{a_{t}}\in M^\infty_s(\rdd)$.\par
This gives the claimed convergence in $\left(\mathcal{F}L_{r}^{1}\right)_{\mathrm{loc}}(\rdd)$.\par
To conclude, we have that 
\[
\left\Vert \left(e_{n,t}-u_{t}\right)\overline{T_{z}\Psi}\right\Vert _{L^{\infty}}\le\left\Vert V_{\Psi}\left(e_{n,t}-u_{t}\right)\left(z,\cdot\right)\right\Vert _{L^{1}}\rightarrow0,
\]
and in particular this yields uniform convergence on compact subsets: for any compact $K\subset\mathbb{R}^{2d}$, choose $\Psi\in\mathcal{S}(\mathbb{R}^{2d})$,
$\Psi=1$ on $K$. 
\subsection{Proof of Corollary \ref{maint s000}}
The proof of Corollary \ref{maint s000} is then immediate, since $C^{\infty}_b(\mathbb{R}^{2d})=\bigcap_{s\ge0}M_{s}^{\infty}(\mathbb{R}^{2d})$ and $C^{\infty}(\rdd) = \bigcap_{r>0} \left( \cF L^1_r \right)_{\mathrm{loc}}(\rdd)$; we leave the easy proof of the latter equality to the interested reader. 

\section{Proof of Theorem \ref{maint sjo}}
We now assume $V\in M^{\infty,1}(\mathbb{R}^{d})$. Therefore for an arbitrary $\epsilon>0$, Lemma
\ref{sjo decomp} allows us to write $V=V_{1}+V_{2}$, with $V_{1}\in C^{\infty}_b(\mathbb{R}^{d})$
and $V_{2}\in M^{\infty,1}(\mathbb{R}^{d})$ with $\left\Vert V_{2}\right\Vert _{M^{\infty,1}}\le \epsilon$ and clearly
\[
\left\Vert V_{1}\right\Vert _{M^{\infty,1}}\leq\left\Vert V\right\Vert _{M^{\infty,1}}+\left\Vert V_{2}\right\Vert _{M^{\infty,1}}\leq\left\Vert V\right\Vert _{M^{\infty,1}}+\epsilon \le 1+ \|V\|_{M^{\infty,1}},
\]
assuming, from now on, $\epsilon \leq 1$. \par
Notice that 
\begin{align*}
e^{-i\frac{t}{n}\left(V_{1}+V_{2}\right)} & =1+\sum_{k=1}^{\infty}\frac{1}{k!}\left(-i\frac{t}{n}\right)^{k}\left(V_{1}+V_{2}\right)^{k}\\
& =1+i\frac{t}{n}V'_{1}+i\frac{t}{n}V'_{2},
\end{align*}
where we set
\[ V'_{1} = -V_{1}\sum_{k=1}^{\infty}\frac{1}{k!}\left(-i\frac{t}{n}\right)^{k-1}V_{1}^{k-1},  \]
\[ V'_{2} = -\sum_{k=1}^{\infty}\frac{1}{k!}\left(-i\frac{t}{n}\right)^{k-1}((V_{1}+V_{2})^k-V_1^k). \]
Now, fix once for all any $s>2d$. The norms of $V'_{1}=V'_{1,n,t}$ and
$V'_{2}=V'_{2,n,t}$ can be estimated as follows for any $t>0$ (cf.\ the proof of Lemma \ref{exp V0}). We have
\begin{equation}\label{c1t}
\left\Vert V'_{1}\right\Vert _{M^{\infty,1}}\le\left\Vert V_{1}\right\Vert _{M^{\infty,1}}e^{t\left\Vert V_{1}\right\Vert _{M^{\infty,1}}}\le\left(1+\left\Vert V\right\Vert _{M^{\infty,1}}\right)e^{t\left(1+\left\Vert V\right\Vert _{M^{\infty,1}}\right)}=: C_1(t),
\end{equation}
\begin{equation}\label{c2t}
\left\Vert V'_{1}\right\Vert _{M_{s}^{\infty}}\le\left\Vert V_{1}\right\Vert _{M_{s}^{\infty}}e^{t\left\Vert V_{1}\right\Vert _{M_{s}^{\infty}}}=: C_2(t,\epsilon).
\end{equation} 
Similarly, using the elementary inequality 
\[
(a+b)^k-a^k\leq kb(a+b)^{k-1},\qquad a,b\geq 0,\ k\geq 1,
\]
we obtain
\begin{equation}\label{c3t}
\left\Vert V'_{2}\right\Vert _{M^{\infty,1}}\le\left\Vert V_{2}\right\Vert _{M^{\infty,1}}e^{t\left(\left\Vert V_{1}\right\Vert _{M^{\infty,1}}+\left\Vert V_{2}\right\Vert _{M^{\infty,1}}\right)}\le \epsilon e^{t\left(2+\left\Vert V\right\Vert _{M^{\infty,1}}\right)} =:  \epsilon\, C_3(t).
\end{equation}
Here $C_1(t)$ and $C_3(t)$ are independent of $n$ and $\epsilon$ and $C_2(t,\epsilon)$ is independent of $n$. The approximate propagator $E_n(t)$ thus becomes
\begin{align*}
E_{n}\left(t\right) & =\left(e^{-i\frac{t}{n}H_0}e^{-i\frac{t}{n}\left(V_{1}+V_{2}\right)}\right)^{n}\\
& =\left(\mu\left(\mathcal{A}_{t/n}\right)\left(1+i\frac{t}{n}V'_{1}+i\frac{t}{n}V'_{2}\right)\right)^{n},
\end{align*}
and similar arguments to those of the previous section yield 
\begin{align}\label{proden}
E_{n}\left(t\right) & =\left[\prod_{k=1}^{n}\left(I+i\frac{t}{n}\left(\sigma_{V'_{1}}\circ\mathcal{A}_{-k\frac{t}{n}}\right)^{\text{w}}+i\frac{t}{n}\left(\sigma_{V'_{2}}\circ\mathcal{A}_{-k\frac{t}{n}}\right)^{\text{w}}\right)\right]\left(\mu\left(\mathcal{A}_{t/n}\right)\right)^{n}\\
& =:\left[a_{n,t}^{\text{w}}+b_{n,t}^{\text{w}}\right]\mu(\mathcal{A}_{t})\nonumber,
\end{align}
where we set \[a_{n,t} = \prod_{k=1}^{n}\left(1+i\frac{t}{n}\left(\sigma_{V'_{1}}\circ\mathcal{A}_{-k\frac{t}{n}}\right)\right),\] 
and in the latter product we mean the twisted product $\#$ of symbols. \par
The term $a_{n,t}^{\text{w}}$ can be estimated as in the proof of Theorem \ref{maint minfty}; in particular, using \eqref{c2t}, we get 
\begin{equation}\label{antest}
\left\Vert a_{n,t}\right\Vert _{M_{s}^{\infty}}\leq C(t,\epsilon)
\end{equation}
 cf.\ \eqref{aennet}. \par
 In order to estimate the $M^{\infty,1}$ norm of the remainder $b_{n,t}$, it is useful the following result, which can be easily proved by induction on $n$. 
\begin{lemma}\label{lem alg sum}
	Let $A$ be a Banach algebra. For any $u_{1},\ldots,u_{n},v_{1},\ldots,v_{n}\in A$,
	with $\left\Vert u_{i}\right\Vert \leq R$ and $\left\Vert v_{i}\right\Vert \leq S$
	for any $i=1,\ldots,n$ and some $R,S>0$, and setting $w_{k}=u_{k}+v_{k}$, we have 
	\[
	\prod_{k=1}^{n}\left(u_{k}+v_{k}\right)=u_{1}u_{2}\ldots u_{n}+z_{n},
	\]
	where 
	\begin{align*}
	z_{n} & =v_{1}w_{2}\ldots w_{n}+u_{1}v_{2}w_{3}\ldots w_{n}+\ldots+u_{1}u_{2}\ldots u_{n-2}v_{n-1}w_{n}+u_{1}u_{2}\ldots u_{n-1}v_{n},
	\end{align*}
	and therefore
	\[
	\left\Vert z_{n}\right\Vert \leq nS\left(R+S\right)^{n-1}.
	\]
	
\end{lemma}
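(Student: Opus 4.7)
The plan is to proceed by induction on $n$, as the author suggests, with the inductive identity following from the simple expansion $(P+Q)(u+v) = Pu + Pv + Qu + Qv$ applied with $P = u_1 \cdots u_n$ and $Q = z_n$.

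For the base case $n=1$, one has $u_1 + v_1 = u_1 + z_1$ with $z_1 = v_1$, and the bound $\|z_1\| = \|v_1\| \leq S = 1 \cdot S \cdot (R+S)^0$ is immediate. For the inductive step, assuming the representation holds at level $n$, I would multiply on the right by $u_{n+1} + v_{n+1} = w_{n+1}$ to obtain
\[
\prod_{k=1}^{n+1}(u_k + v_k) = (u_1 \cdots u_n + z_n)(u_{n+1} + v_{n+1}) = u_1 \cdots u_{n+1} + z_{n+1},
\]
where
\[
z_{n+1} = u_1 \cdots u_n v_{n+1} + z_n w_{n+1}.
\]
Expanding $z_n w_{n+1}$ using the inductive formula for $z_n$ and appending the extra term $u_1 \cdots u_n v_{n+1}$ produces exactly the claimed telescoping sum with $n+1$ summands: the $j$-th summand (for $j=1,\ldots,n$) gains an additional factor $w_{n+1}$ on the right, and the new final summand $u_1 \cdots u_n v_{n+1}$ corresponds to $j = n+1$.

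For the norm estimate, I would combine the inductive bound $\|z_n\| \leq n S (R+S)^{n-1}$ with the submultiplicativity of the Banach algebra norm, the bound $\|w_{n+1}\| \leq R+S$, and $\|u_1 \cdots u_n\| \leq R^n \leq (R+S)^n$ to get
\[
\|z_{n+1}\| \leq \|z_n\|\,\|w_{n+1}\| + \|u_1 \cdots u_n\|\,\|v_{n+1}\| \leq nS(R+S)^n + S(R+S)^n = (n+1)S(R+S)^n,
\]
closing the induction. There is no real obstacle here: the statement is purely algebraic together with submultiplicativity, and the combinatorics of the telescoping decomposition matches the one-step expansion exactly; the only thing to be careful about is making sure that the ordering of factors in the Banach algebra (which is not assumed commutative, cf.\ Remark \ref{rem algebra}) is preserved throughout, which the one-sided multiplication by $w_{n+1}$ handles automatically.
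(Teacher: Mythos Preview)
Your proof is correct and follows exactly the induction strategy the paper indicates; the paper itself only states that the lemma ``can be easily proved by induction on $n$'' without giving details, and your argument fills this in accurately, including the care with the ordering of factors in the non-commutative setting.
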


Setting 
\[
u_{k}=1+i\frac{t}{n}\left(\sigma_{V'_{1}}\circ\mathcal{A}_{-k\frac{t}{n}}\right),\qquad v_{k}=i\frac{t}{n}\left(\sigma_{V'_{2}}\circ\mathcal{A}_{-k\frac{t}{n}}\right),\quad k=1,\ldots,n,
\] 
and applying Lemma \ref{tech lem} with $T=t$, and  \eqref{c1t} and \eqref{c3t}, we get
\[
\Vert u_{k} \Vert _{M^{\infty,1}}=\big\Vert 1+i\frac{t}{n}\left( \sigma_{V'_1}\circ\mathcal{A}_{-k\frac{t}{n}} \right) \big\Vert _{M^{\infty,1}}\leq1+\frac{t}{n}\big\Vert \sigma_{V'_{1}}\circ\mathcal{A}_{-k\frac{t}{n}}\big\Vert _{M^{\infty,1}}\leq1+\frac{t}{n}C(t),
\]
\[
\left\Vert v_{k}\right\Vert _{M^{\infty,1}}=\frac{t}{n}\big\Vert \sigma_{V'_{2}}\circ\mathcal{A}_{-k\frac{t}{n}}\big\Vert _{M^{\infty,1}}\leq\frac{t}{n}C\left(t\right)\epsilon,
\]
for some locally bounded constant $C(t)>0$ independent of $n$ and $\epsilon$.
Therefore, by Lemma \ref{lem alg sum},
\begin{equation}\label{stima bnt}
\left\Vert b_{n,t}\right\Vert _{M^{\infty,1}}\le n\frac{t}{n}C\left(t\right)\epsilon\Big(1+2\frac{t}{n}C\left(t\right)\Big)^{n-1}\le\epsilon tC\left(t\right)e^{2tC\left(t\right)}.
\end{equation}
Following the pathway of the proof of Theorem \ref{maint minfty}, we write $E_{n}\left(t\right)$
as an integral operator with kernel 
\begin{align*}
e_{n,t}(x,y) & =c(t)\left| \det B_{t} \right|^{-1/2}e^{2\pi i\Phi_{t}\left(x,y\right)}\big(\widetilde{a_{n,t}}+\widetilde{b_{n,t}}\big)\left(x,y\right) \\ & =c(t)\left| \det B_{t} \right|^{-1/2}e^{2\pi i\Phi_{t}\left(x,y\right)}k_{n,t}(x,y),
\end{align*}
that is $k_{n,t}=\widetilde{a_{n,t}}+\widetilde{b_{n,t}}$, and the Trotter formula \eqref{trotter maint} combined with Proposition \ref{propokernel} imply that $k_{n,t} \rightarrow k_t$ in $\mathcal{S}'(\mathbb{R}^{2d})$, where the distribution $k_t$ is conveniently introduced to rephrase the integral kernel $u_t$ of the propagator $U\left(t\right)=e^{-it\left(H_{0}+V\right)}$ as
\[
u_{t}\left(x,y\right)= c(t)\left| \det B_{t} \right|^{-1/2}e^{2\pi i\Phi_{t}\left(x,y\right)}k_t(x,y).
\] 
By repeating this argument with $V_2=0$ (hence $\widetilde{b_{n,t}}=0$ and $k_{n,t}=\widetilde{a_{n,t}}$) we see that $\widetilde{a_{n,t}}$ converges in $\cS'(\rdd)$ as well, hence $\widetilde{b_{n,t}}$ converges in $\mathcal{S}'(\mathbb{R}^{2d})$
by difference. Therefore, for any non-zero $\Psi\in\mathcal{S}(\mathbb{R}^{2d})$ the functions $V_{\Psi}\widetilde{a_{n,t}}$ and $V_{\Psi}\widetilde{b_{n,t}}$ converge pointwise in $\R^{4d}$.\par
We need a technical lemma at this point.
\begin{lemma}
	\label{lem limsup}Let $F_n$ and $G_n$
	be two sequences of complex-valued functions on $\mathbb{R}^{2d}$
	such that $F_{n}\rightarrow F$, $G_{n}\rightarrow G$ pointwise, and assume $\left|F_{n}\right|\le H\in L^{1}(\mathbb{R}^{2d})$
	and $\left\Vert G_{n}\right\Vert _{L^{1}}\le\epsilon$ for any $n\in\mathbb{N}$.
	Then, 
	\[
	\limsup_{n\rightarrow\infty}\left\Vert F_{n}+G_{n}-\left(F+G\right)\right\Vert _{L^{1}}\le2\epsilon. 	\]
\end{lemma}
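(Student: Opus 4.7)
The plan is to split the $L^1$ norm by the triangle inequality and handle the two pieces separately: the $F_n$'s by dominated convergence, the $G_n$'s by a crude Fatou-type bound.

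First, I would write
\[
\|F_n+G_n-(F+G)\|_{L^1}\leq \|F_n-F\|_{L^1}+\|G_n-G\|_{L^1},
\]
so the proof reduces to bounding each term. For the first, the hypothesis $|F_n|\leq H\in L^1(\rdd)$ together with the pointwise convergence $F_n\to F$ puts us exactly in the setting of Lebesgue's dominated convergence theorem (applied to $|F_n-F|\leq 2H$, with $|F|\leq H$ automatic from the pointwise limit), and hence $\|F_n-F\|_{L^1}\to 0$.

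For the second term I cannot hope for $G_n\to G$ in $L^1$, since no domination is assumed; but I do not need it. By Fatou's lemma applied to $|G_n|$,
\[
\|G\|_{L^1}\leq \liminf_{n\to\infty}\|G_n\|_{L^1}\leq \epsilon,
\]
and therefore by the triangle inequality $\|G_n-G\|_{L^1}\leq \|G_n\|_{L^1}+\|G\|_{L^1}\leq 2\epsilon$ for every $n$. Combining the two estimates and taking $\limsup$ gives
\[
\limsup_{n\to\infty}\|F_n+G_n-(F+G)\|_{L^1}\leq 0+2\epsilon=2\epsilon,
\]
as required. There is no real obstacle here: the only subtlety worth noting is that one should not try to show $G_n\to G$ in $L^1$ (which would require extra hypotheses); the whole point of the lemma is that the small-norm tail $G_n$ only contributes a $2\epsilon$ error uniformly in $n$, which is exactly what will be needed when this lemma is applied to the remainder $V_\Psi \widetilde{b_{n,t}}$ in the proof of Theorem \ref{maint sjo}.
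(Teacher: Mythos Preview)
Your proof is correct and follows essentially the same approach as the paper's own proof: triangle inequality, dominated convergence for the $F_n$ part, and Fatou's lemma to bound $\|G\|_{L^1}\le\epsilon$ so that $\|G_n-G\|_{L^1}\le 2\epsilon$. The only cosmetic difference is the order in which the steps are presented.
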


\begin{proof}
	First, notice that $\left\Vert G\right\Vert _{L^{1}}\le\epsilon$
	by Fatou's lemma. Now, 
	\[
	\left\Vert F_{n}+G_{n}-\left(F+G\right)\right\Vert _{L^{1}}\le\left\Vert F_{n}-F\right\Vert _{L^{1}}+\left\Vert G_{n}-G\right\Vert _{L^{1}},
	\]
	where the first term on the right-hand side goes to zero by dominated convergence,
	while for the other one we have $\left\Vert G_{n}-G\right\Vert _{L^{1}}\le2\epsilon$. The desired conclusion is then immediate.
	 \end{proof}
For any fixed $z\in\mathbb{R}^{2d}$, set $F_{n}(\zeta)=V_{\Psi} \widetilde{a_{n,t}}(z,\zeta)$
and $G_{n}(\zeta)=V_{\Psi}\widetilde{b_{n,t}}(z,\zeta)$. \par
By Lemma \ref{tech lem}  and  \eqref{antest} we have 
\[
 \sup_{\zeta\in\rdd}\left\langle \zeta\right\rangle ^{s}\left|F_{n}\left(\zeta\right)\right|\lesssim \| \widetilde{a_{n,t}}\|_{M^\infty_s} \lesssim \|{a_{n,t}}\|_{M^\infty_s} \leq C(t,\epsilon). 
 \]
Similarly, by Lemma \ref{tech lem} and \eqref{stima bnt},
\[
\left\Vert G_{n}\right\Vert _{L^{1}} \lesssim  \Vert \widetilde{b_{n,t}}\Vert _{M^{\infty,1}}\lesssim\left\Vert b_{n,t}\right\Vert _{M^{\infty,1}}\leq\epsilon\,  C(t).
\]
These estimates yield two results: on the one hand, the first claim of Theorem \ref{maint sjo} is proved. 
On the other hand, the assumptions of Lemma
\ref{lem limsup} are satisfied: we have $\left(F_n+G_n\right)(\zeta)=V_{\Psi}k_{n,t}(z,\zeta)$ and $\left(F+G\right)(\zeta)=V_{\Psi}k_t(z,\zeta)$, and therefore we obtain
\[
\limsup_{n\rightarrow\infty}\left\Vert \mathcal{F}\left[\left(k_{n,t}-k_{t}\right)\overline{T_{z}\Psi}\right]\right\Vert _{L^{1}} \le 2 \epsilon\, C(t).
\]
Since $\epsilon$ can be made arbitrarily small and the left-hand side is independent of $\epsilon$, we conclude that
\[
\lim_{n\rightarrow\infty}\left\Vert \mathcal{F}\left[\left(k_{n,t}-k_{t}\right)\overline{T_{z}\Psi}\right]\right\Vert _{L^{1}}=0,
\]
in particular $k_{n,t}\rightarrow k_{t}$ in $(\mathcal{F}L^{1})_{\mathrm{loc}}$$(\mathbb{R}^{2d})$.\par
Finally, with the help of a suitable bump function $\Theta$ as
in the preceding section, for any fixed $z\in\mathbb{R}^{2d}$ we deduce
\[
\left\Vert  \mathcal{F}\left[\left(e_{n,t}-u_{t}\right)\overline{T_{z}\Psi}\right]\right\Vert _{L^{1}} \le \left| \det B_{t} \right|^{-1/2}\left\Vert \mathcal{F}\left[\left(T_{z}\Theta e^{2\pi i\Phi_{t}}\right)\right]\right\Vert _{L^{1}}\left\Vert \mathcal{F}\left[\left(k_{n,t}-k_{t}\right)\overline{T_{z}\Psi}\right]\right\Vert _{L^{1}},
\]
and thus
\[
\left\Vert  \mathcal{F}\left[\left(e_{n,t}-u_{t}\right)\overline{T_{z}\Psi}\right]\right\Vert _{L^{1}} \rightarrow0.
\]
This gives $e_{n,t}\rightarrow u_{t}$ in $(\mathcal{F}L^{1})_{\mathrm{loc}}$$(\mathbb{R}^{2d})$ and therefore uniformly on compact subsets of $\rdd$.

\end{document}